\documentclass[11pt]{article}

\usepackage[margin=1in]{geometry}

\usepackage[utf8]{inputenc}
\usepackage[T1]{fontenc}

\usepackage{amsmath,amssymb,amsfonts,amsthm}
\usepackage{mathtools}
\usepackage{bbm}

\usepackage{graphicx}
\usepackage{booktabs}
\usepackage{nicefrac}
\usepackage{algorithm}
\usepackage{algorithmic}
\usepackage{float}
\usepackage{nicematrix}
\usepackage{subfigure}

\usepackage[table]{xcolor}
\usepackage{url}
\usepackage{natbib}
\usepackage{hyperref}
\usepackage[capitalize,noabbrev]{cleveref}

\usepackage{microtype}

\usepackage{tikz}
\usetikzlibrary{arrows.meta,positioning,fit}

\usepackage{comment}
\usepackage{lastpage}

\newtheorem{theorem}{Theorem}
\newtheorem{lemma}[theorem]{Lemma}
\newtheorem{proposition}[theorem]{Proposition}
\newtheorem{corollary}[theorem]{Corollary}

\theoremstyle{definition}

\theoremstyle{remark}

\DeclareMathOperator{\cov}{cov}

\title{Smoothed Score Queries and the Complexity of Sampling}

\author{
  Jingbo Liu\thanks{Department of Statistics, University of Illinois Urbana--Champaign. Email: \texttt{jingbol@illinois.edu}. This research 
was supported in part by NSF Grant DMS-2515510.}
}

\date{}

\begin{document}

\maketitle

\begin{abstract}
We study the query complexity of sampling from high-dimensional Gaussian distributions using gradient information.  In the standard oracle model, exact gradients expose only matrix-vector products with the precision matrix, leading to polynomial approximation barriers and a characteristic \(\sqrt{\kappa}\) dependence on the condition number.  We show that this barrier disappears when the sampler is allowed to query \emph{smoothed scores}, namely gradients of the logarithms of the Gaussian-convolved densities.  For a Gaussian target with precision matrix \(\Lambda\), a smoothed-score query at noise level \(\tau\) gives access to the resolvent \((\Lambda+\tau^{-1}I)^{-1}\).  Combining geometrically spaced noise levels with sinc-quadrature rational approximation, we obtain a sampler with
\[
    q
    =
    O\!\left(
        \bigl(\log\kappa+\log(e\sqrt d/\delta_{\rm TV})\bigr)
        \log(e\sqrt d/\delta_{\rm TV})
    \right)
\]
smoothed-score queries for total variation error \(\delta_{\rm TV}\), improving the condition-number dependence from \(\sqrt{\kappa}\) to logarithmic.
We also study finite-bit gradient oracles.  Using coordinatewise quantization of the transformed smoothed-score answers and a final dithering step, we obtain a sampling scheme whose total communicated gradient information is polylogarithmic in \(\kappa\); in particular, for fixed dimension and accuracy, the bit complexity is \(O(\log^2\kappa)\).  To complement these upper bounds, we introduce a channel-synthesis, or reverse-Shannon, converse technique for sampling lower bounds.  This converts total-variation simulation guarantees into communication requirements and yields an \(\Omega(\log\kappa)\) lower bound on the required gradient information.  Together, these results identify smoothed scores as a provably more informative oracle for sampling and give nearly matching upper and lower bounds for its finite-bit complexity.
\end{abstract}

\section{Introduction}

Annealing is a standard way to make high-dimensional sampling easier.  One common approach is \emph{tilting}: given a target potential $U$ (negative log of the target distribution) and a known Gaussian potential $U_G$, one constructs an interpolating family $U_t=(1-t)U+tU_G$ and samples along this path.  Another approach is \emph{smoothing}: one convolves the target distribution with a Gaussian kernel and queries the score of the smoothed law,
which arises naturally, for example, in training diffusion models.
These two procedures are often used for similar algorithmic purposes, but they expose fundamentally different information to a sampler: the score function of the tilted distribution, $-\nabla U_t=-(1-t)\nabla U-t\nabla U_G$,
contains the same information as $\nabla U$, whereas the score of the smoothed law may provide more.  This paper studies the distinction at the level of oracle and communication complexity.

We focus on Gaussian targets $\pi=\mathcal N(0,\Sigma)$ with precision matrix $\Lambda=\Sigma^{-1}$ and $\operatorname{spec}(\Lambda)\subseteq[1,\kappa]$.  This model is simple enough to permit sharp statements, yet it captures a basic local regime of Bayesian computation: near a posterior mode, the Laplace approximation replaces the posterior by a Gaussian whose covariance is the inverse Hessian.  Understanding whether a sampling oracle can efficiently recover this local covariance structure is therefore a useful test case for annealing and diffusion-inspired methods.

In the standard first-order oracle model, querying the target score gives matrix-vector products with $\Lambda$.  Sampling from $\mathcal N(0,\Lambda^{-1})$ requires applying $\Lambda^{-1/2}$ to a standard Gaussian vector, and Krylov and Chebyshev methods approximate $x^{-1/2}$ on $[1,\kappa]$ by polynomials.  This is the source of the familiar $\sqrt\kappa$ dependence appearing in query lower and upper bounds for Gaussian sampling, including the Gaussian result of \citet{chewi2023query}.  By contrast, a smoothed-score query at noise level $\tau$ gives
$$
    s_\tau(y)=\nabla\log(\pi*\mathcal N(0,\tau I))(y)=-(\Sigma+\tau I)^{-1}y,
$$
which is equivalent to applying the resolvent $(\Lambda+\tau^{-1}I)^{-1}$.  Thus smoothing changes the approximation problem from polynomial approximation to rational approximation
by functions of the form $\frac1{x+\tau^{-1}}$, giving rise to an improved $\log\kappa$ dependence.  The improvement in this paper comes from exploiting this rational structure.

\paragraph{Oracle models.}
We consider two smoothed-score oracle models. 
\begin{itemize}
\item In the \emph{exact model}, an algorithm chooses a noise level $\tau>0$ and query point $y\in\mathbb R^d$ and receives the full vector $s_\tau(y)$.
\item In the \emph{finite-bit model}, the algorithm receives only a transcript of encoded smoothed-score information; the total number of transmitted bits is denoted by $Q$.  This model captures finite-precision or noisy implementations, where a real-valued gradient vector cannot be communicated with infinite precision.
\end{itemize}  

\paragraph{Informal theorem.}
For centered Gaussian targets with condition number $\kappa$, exact smoothed-score queries allow total-variation sampling with
$$
    q=O\left((\log\kappa+\log(e\sqrt d/\delta_{\mathrm{TV}}))\log(e\sqrt d/\delta_{\mathrm{TV}})\right)
$$
queries (see Theorem~\ref{thm:exact-upper-short} and Theorem~\ref{thm:exact-upper-independent}).  A coordinatewise quantized version with final dithering uses total communication $Q=\operatorname{polylog}(\kappa)$ for fixed dimension and accuracy (Theorem~\ref{thm:coordinate-quantized-upper}). The same exact-query complexity extends to uncentered Gaussians with constant overhead (Proposition~\ref{prop:noncentered-reduction}). On the converse side, 
we introduce a method based on channel synthesis \citep{bennett2014quantum,cuff2013distributed},
which yields a general finite-transcript lower-bound on the bit-budget for sampling on the order of $d\log \kappa$ (Theorem~\ref{thm:gaussian-bit-lower-fixed-delta}).

\paragraph{Contributions.}
Our results have two directions.
\begin{itemize}
    \item \textbf{Upper bounds from smoothed scores.}
    We show that smoothed-score queries reveal resolvents of the precision matrix.  Combining geometrically spaced noise levels with sinc-quadrature rational approximation gives a sampler with logarithmic dependence on $\kappa$, improving over the $\sqrt\kappa$ behavior of unsmoothed score/Krylov methods.  This gives a theoretical basis for why smoothing can be more powerful than tilting for annealing: smoothing changes the oracle from polynomial access to rational access.  We also give a finite-bit coordinatewise quantization scheme and an uncentered Gaussian corollary.

    \item \textbf{Lower bounds from channel synthesis.}
    We introduce a channel-synthesis, or reverse-Shannon, method for lower bounding sampling communication.  A finite-bit sampler for a class $\{p_x:x\in\mathcal X\}$ is a simulator for the channel $x\mapsto Y\sim p_x$.  Channel coding then gives a converse for simulation.  
    Compared with Fano-style arguments for query lower bounds \citep{chen2023sampling,chewi2022fisher,chewi2023query}, our approach applies directly to sampling without requiring reduction to parameter estimation,
    and also gives stronger lower bound when the allowed TV error is close to one (see Theorem~\ref{thm:fixed-kappa-large-d}).
\end{itemize}

\paragraph{Limitations.}
Our lower bounds apply to finite-bit or finite-information oracle models. Extending the channel-synthesis lower-bound approach to the exact real-valued query setting remains elusive: an exact smoothed-score response can, in principle, reveal infinitely many bits, so the transcript-counting argument no longer applies directly. Developing lower bounds for exact smoothed-score oracles, especially beyond Gaussian targets, is an important open direction.
Furthermore, our matching upper bound exploits the algebraic structure of the Gaussian scores;
for more general target classes, our lower bound has a gap to existing upper bounds for the diffusion models.

\paragraph{Related work.}

\emph{Upper bounds.}
For a Gaussian target \(N(0,\Lambda^{-1})\), first-order oracle access is
equivalent to matrix-vector products with the precision matrix \(\Lambda\).  Hence
sampling reduces to approximating \(\Lambda^{-1/2}g\) for \(g\sim N(0,I)\), a
standard matrix-function-vector problem treated by Krylov and Lanczos methods
\citep{aune2013iterative,chow2014preconditioned}.  Since \(k\)-step Krylov methods
produce vectors of the form \(p(\Lambda)g\) with \(\deg(p)<k\), their performance
is governed by polynomial approximation of \(x^{-1/2}\) on the spectrum of
\(\Lambda\), normalized to \([1,\kappa]\)
\citep{musco2018stability,chen2022lanczos,chewi2023query}.  The worst-case degree
needed for such polynomial approximation is \(\Theta(\sqrt{\kappa})\), up to
logarithmic factors, explaining the familiar \(\sqrt{\kappa}\)-type dependence.

The smoothed-score oracle considered here gives a different primitive.  Let
\(p_{\sigma^2}=p*\mathcal N(0,\sigma^2 I)\).  For a Gaussian target,
\[
    y+\sigma^2 \nabla \log p_{\sigma^2}(y)
    =
    (I+\sigma^2\Lambda)^{-1}y
    =
    \alpha(\Lambda+\alpha I)^{-1}y,
    \qquad \alpha=\sigma^{-2}.
\]
Thus exact smoothed-score queries realize shifted-inverse, or resolvent, queries
for the precision matrix.  The relevant approximation problem is therefore rational
rather than polynomial approximation of \(x^{-1/2}\).
Rational approximations to inverse square roots have also been used in numerical
linear-algebra methods for high-dimensional Gaussian sampling.  
In particular,
\citet{aune2013iterative} uses rational approximations to \(\Lambda^{-1/2}z\) as a
numerical linear-algebra method for high-dimensional Gaussian sampling, 
focusing on computation cost and optimal quadrature points.  
Their setting assumes only matrix-vector access to $\Lambda$, so each resolvent evaluation \((\Lambda+\sigma I)^{-1}z\) itself must be approximated computationally.
In
contrast, our work applies rational-approximation to the query
complexity problem under the smoothed-score oracle model which gives direct access to \((\Lambda+\sigma I)^{-1}z\).

Our oracle model is also related to annealing and diffusion-based sampling.
Annealed importance sampling and tempering methods interpolate between tractable
and target distributions by changing the potential or temperature
\citep{neal2001annealed}.  Score-based diffusion models instead use Gaussian
noising and the time-dependent scores of the smoothed distributions
\citep{ho2020denoising,song2021scorebased}.  A growing body of theory shows that
accurate score estimates along such noising paths can be sufficient for sampling
under broad assumptions \citep{chen2023sampling}.  Recent upper bounds have further
refined the dependence on the ambient dimension \(d\) and target accuracy
\(\delta\).  Under uniform or time-dependent Lipschitz score assumptions,
several works obtain improved, and in some regimes sublinear, dependence on \(d\)
\citep{chen2023probability,jiao2024instance,zhang2025sublinear}.  Other
high-accuracy analyses obtain polylogarithmic dependence on \(1/\delta\) under
stronger regularity, oracle, or intrinsic-dimension assumptions
\citep{chen2026highaccuracy}.  For smooth strongly log-concave targets,
higher-order Langevin methods, such as the Picard--Lagrange framework of
\citet{mahajan2025picard}, obtain polynomial dependence on \(1/\delta\) with an
exponent that decreases with the order of the method.

\emph{Lower bounds.}
Query lower bounds for sampling have been developed in several oracle models,
including lower bounds for strongly log-concave and smooth targets,
Fisher-information lower bounds for broader distribution classes, and oracle lower
bounds for Gaussian sampling
\citep{chewi2022query,chewi2022fisher,chewi2023query,lu2023upper}.  In the
ordinary-gradient Gaussian model with first-order oracle, the Krylov and polynomial-approximation viewpoint
leads to \(\sqrt{\kappa}\)-type barriers, again up to logarithmic factors. 
This does not rule out the possibility of faster sampling with smoothed-score oracles.

The first information-theoretic query lower bound for smoothed scores is due to \citet{xun2026query},
who study sampling with access to learned smoothed-score estimates.  They show
that, for broad \(d\)-dimensional target classes and polynomially accurate score
estimates, any sampler requires \(\widetilde{\Omega}(\sqrt d)\) adaptive score
queries.  Their result captures a dimension-dependent difficulty arising from the
need to search across many noise levels in an approximate-score oracle model.
Related works also establish \(\widetilde{\Omega}(\sqrt d)\)-type barriers for
specialized diffusion samplers \citep{gao2025convergence,JZL25}; these lower
bounds are not information-theoretic, but address  algorithm-specific diffusion settings.

Our lower-bound argument follows a different route, inspired by channel synthesis
and reverse Shannon theorems
\citep{cuff2013distributed,bennett2014quantum,berta2011quantum}.  Instead of
reducing sampling to recovery of a hidden parameter, as in Fano-style arguments in prior works, we
view a distribution class as a channel \(\theta \mapsto P_\theta\) and lower bound
the finite communication required to synthesize one output sample from this
channel.  This yields a bit-complexity converse rather than an estimation or
testing lower bound.

\paragraph{Organization.}
Section~\ref{subsec:gaussian-sampling-setup} formulates the problem and the two query models.
Section~\ref{sec:upper-bounds} gives the centered, uncentered, and finite-bit Gaussian upper bounds. Section~\ref{sec:lb-channel-synthesis} develops a general channel-synthesis converse for finite-bit sampling. Section~\ref{sec:capacity-calculation} proves a channel coding achievability bound for condition-number-bounded Gaussian families and combines it with the channel-synthesis converse to obtain query lower bounds. Technical proofs and numerical validation are deferred to the appendices.

\section{Gaussian sampling setup}
\label{subsec:gaussian-sampling-setup}

\paragraph{Notation.}
Throughout the paper, \(\log\) denotes the natural logarithm unless a base is explicitly indicated.  We write \(\log_2\) for the base-two logarithm, which is used for bit lengths, coding rates, capacities, and transcript sizes.  
Asymptotic notation such as \(O(\cdot)\) hide universal constants.
For a random variable \(Y\), \(\mathcal L(Y)\) denotes its law.  We write \(d_{\mathrm{TV}}(P,Q)\) for total variation distance.  For symmetric matrices, \(A\preceq B\) denotes the Loewner order, and \(\operatorname{spec}(A)\) denotes the spectrum of \(A\).  The identity matrix in dimension \(d\) is denoted by \(I_d\), or simply \(I\) when the dimension is clear.  Constants denoted by \(C,c>0\) are universal unless explicitly indexed, for example \(C_\rho\) may depend on \(\rho\).

\paragraph{Gaussian targets.}
We consider centered Gaussian targets
\begin{align}
    \pi=\mathcal N(0,\Sigma),
    \qquad
    \Lambda=\Sigma^{-1},
    \qquad
    \operatorname{spec}(\Lambda)\subseteq[1,\kappa].
    \label{e_target}
\end{align}
The dimension is $d$, and $\kappa$ is the condition-number parameter. We normalize the smallest eigenvalue of $\Lambda$ to be at least one; this fixes scale and makes the target covariance satisfy $\Sigma\preceq I$.
We focus on the centered case, without loss of generality:
the uncentered Gaussian case can be reduced to the centered
case by estimating the mean with two queries,
and recentering; see Proposition~\ref{prop:noncentered-reduction}.

For $\tau>0$, let
$$
    s_\tau(y)
    =
    \nabla\log\left(\pi*\mathcal N(0,\tau I)\right)(y)
$$
be the score of the Gaussian-smoothed target. Since $\pi$ is Gaussian,
$$
    \pi*\mathcal N(0,\tau I)
    =
    \mathcal N(0,\Sigma+\tau I),
    \qquad
    s_\tau(y)
    =
    -(\Sigma+\tau I)^{-1}y.
$$
Thus a smoothed-score query gives information about a shifted covariance inverse. The algorithms below use two oracle models.

\paragraph{Exact smoothed-score oracle.}
In the exact-query model, an algorithm may adaptively choose a noise level $\tau_t>0$ and query point $y_t\in\mathbb R^d$. The oracle returns the full vector
$$
    g_t
    =
    s_{\tau_t}(y_t)
    =
    \nabla\log\left(\pi*\mathcal N(0,\tau_t I)\right)(y_t).
$$
The query point $y_t$ and noise level $\tau_t$ may depend on the algorithm's internal randomness and on all previous oracle responses. The query complexity is the number of oracle calls, denoted by $q$.

\paragraph{Finite-bit smoothed-score oracle.}
In the finite-bit model, the oracle does not return the full vector
$g_t=s_{\tau_t}(y_t)$.  The query index $t$ ranges over the integers
$t=1,\dots,q$, where $q$ is the total number of oracle calls.  At query
$t$, the algorithm adaptively chooses a noise level $\tau_t>0$, a query
point $y_t\in\mathbb R^d$, a bit budget $B_t\in\mathbb N$, and an encoder
$$
    \mathsf{Enc}_t:\mathbb R^d\to\{0,1\}^{B_t}.
$$
The oracle computes the smoothed-score $g_t=s_{\tau_t}(y_t)$ and sends only the message
$$
    b_t=\mathsf{Enc}_t(g_t)\in\{0,1\}^{B_t}.
$$
The algorithm may then apply an arbitrary reconstruction or decision rule to the transcript.  Equivalently, it may choose a decoder
$$
    \mathsf{Dec}_t:\{0,1\}^{B_t}\to\mathbb R^d
$$
and form a reconstructed vector $\widehat g_t=\mathsf{Dec}_t(b_t)$, but the model does not impose a fixed reconstruction loss.  The only communication constraint is the total number of transmitted bits
$$
    Q=\sum_{t=1}^{q}B_t.
$$
The encoders, decoders, query points, noise levels, and bit allocations may all be chosen adaptively from previous messages and the algorithm's internal randomness.

In the upper bound below, we use a simple nonadaptive allocation: each query uses a straightforward quantization encoder with the same coordinatewise bit depth $B$, so the total number of bits per query is $dB$ and $Q=dBq$.  The lower bound, however, is stated for the more general model above with arbitrary adaptive choices of $B_t$.  Thus the converse already rules out the possibility that a substantially better bit complexity can be obtained merely by varying the number of bits across queries.

\section{Upper bounds}
\label{sec:upper-bounds}

We prove two upper bounds for Gaussian targets. Both use the same rational approximation to the inverse square root. The first theorem assumes exact smoothed-score queries. The second theorem is a finite-bit analogue in which the transformed query answers are quantized and a final isotropic dither is added to avoid singularity in total variation.

\subsection{Exact smoothed-score queries}
\label{subsec:exact-smoothed-score}

The exact sampler is Algorithm~\ref{alg:exact-rational-sampler},
which is built on 
Lemma~\ref{lem:sinc-quadrature},
and the corresponding query complexity is $q=|\mathcal J|=M+N+1$.
Our result is as follows.

\begin{algorithm}[t]
\caption{Exact rational sampler from smoothed-score queries}
\label{alg:exact-rational-sampler}
Input: $\delta_{\rm TV}\in(0,1)$, $\kappa\ge 1$.
\begin{enumerate}
\item Set
$\eta
    =
    \frac{\delta_{\mathrm{TV}}}{4\sqrt d}$,
and compute $h$, $M$, $N$, $\mathcal{J}$, $\alpha_j$, $c_j$ as in Section~\ref{subsec:sinc-grid}.
Furthermore, set $\tau_j:=\alpha_j^{-1}$.
    \item Draw $Z\sim\mathcal N(0,I_d)$.
    \item For each $j\in\mathcal J$, query $s_{\tau_j}(Z)$ and form
    $X_j=\tau_j Z+\tau_j^2s_{\tau_j}(Z)$.
    \item Output
    $$
        Y=\sum_{j\in\mathcal J}c_jX_j.
    $$
\end{enumerate}
\end{algorithm}

\begin{theorem}[Exact smoothed-score upper bound]
\label{thm:exact-upper-short}
Consider arbitrary $\kappa\ge 1$, $\delta_{\rm TV}\in (0,1)$, 
centered Gaussian target satisfying \eqref{e_target},
and assume the exact smoothed-score oracle.
Run Algorithm~\ref{alg:exact-rational-sampler}.
Then
\[
    d_{\mathrm{TV}}\left(\mathcal L(Y),\pi\right)
    \leq
    \delta_{\mathrm{TV}},
\]
and the number of exact smoothed-score queries is
\[
    q
    =
    O\left(
        \left(
            \log\kappa+\log(e\sqrt d/\delta_{\mathrm{TV}})
        \right)
        \log(e\sqrt d/\delta_{\mathrm{TV}})
    \right),
\]
where $O(\cdot)$ hides universal constants.
\end{theorem}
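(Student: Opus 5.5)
First I compute what a single query in Algorithm~\ref{alg:exact-rational-sampler} returns. Since $s_\tau(Z)=-(\Sigma+\tau I)^{-1}Z$ and $\Sigma=\Lambda^{-1}$, we have $(\Sigma+\tau I)^{-1}=\Lambda(I+\tau\Lambda)^{-1}=\tau^{-1}\Lambda(\Lambda+\alpha I)^{-1}$ with $\alpha=\tau^{-1}$. Hence
\[
X_j=\tau_j Z-\tau_j\Lambda(\Lambda+\alpha_jI)^{-1}Z=\tau_j\alpha_j(\Lambda+\alpha_jI)^{-1}Z=(\Lambda+\alpha_jI)^{-1}Z .
\]
So each query gives one exact resolvent applied to the same Gaussian vector. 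It follows that $Y=r(\Lambda)Z$ with $r(x)=\sum_{j\in\mathcal J}c_j/(x+\alpha_j)$, and $\mathcal L(Y)=\mathcal N(0,r(\Lambda)^2)$. All matrices here are functions of $\Lambda$, so they are simultaneously diagonalizable.

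Next I invoke Lemma~\ref{lem:sinc-quadrature} with the grid of Section~\ref{subsec:sinc-grid}. Presumably it comes from $x^{-1/2}=\frac{2}{\pi}\int_{-\infty}^{\infty}\frac{e^{s}}{x+e^{2s}}\,ds$ with $\alpha_j=e^{2jh}$ and $c_j=\frac{2h}{\pi}e^{jh}$. I take it to give a uniform relative bound $|r(x)\sqrt x-1|\le\eta$ on $[1,\kappa]$, with $h\asymp 1/\log(1/\eta)$ and $M,N\asymp\log(1/\eta)/h+\log\kappa/h$. With $\eta=\delta_{\rm TV}/(4\sqrt d)$ we have $\log(1/\eta)=O(\log(e\sqrt d/\delta_{\rm TV}))$. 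This makes $q=M+N+1=O\bigl((\log\kappa+\log(e\sqrt d/\delta_{\rm TV}))\log(e\sqrt d/\delta_{\rm TV})\bigr)$, which is the stated count. If the lemma is stated as an absolute error instead, I convert it to a relative error using $x^{-1/2}\ge\kappa^{-1/2}$. That conversion only adds $\log\kappa$ inside the truncation count, so the bound is unchanged.

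Then I turn the spectral bound into a TV bound. Write $\Sigma_Y=r(\Lambda)^2$ and $A=\Sigma^{-1/2}\Sigma_Y\Sigma^{-1/2}$. Since everything commutes, $A=r(\Lambda)^2\Lambda$, and its eigenvalues $\lambda_i=(r(\mu_i)\sqrt{\mu_i})^2$ lie in $[(1-\eta)^2,(1+\eta)^2]$. I apply Pinsker together with the Gaussian KL formula $\mathrm{KL}=\tfrac12\sum_i(\lambda_i-1-\log\lambda_i)\le C\sum_i(\lambda_i-1)^2$ for $|\lambda_i-1|\le 1/2$. This gives $d_{\rm TV}\le\sqrt{\mathrm{KL}/2}\le C'\sqrt d\,\eta$. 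Alternatively, the known bound $d_{\rm TV}\le\tfrac32\|A-I\|_F$ gives $d_{\rm TV}\le\tfrac32\sqrt d\,(2\eta+\eta^2)$. For the constant to come out at most $\delta_{\rm TV}$ with $\eta=\delta_{\rm TV}/(4\sqrt d)$, I use the sharper KL route. Here $\log\lambda_i=2\log(1+\epsilon_i)$ with $|\epsilon_i|\le\eta$, and $\lambda-1-\log\lambda\le(\lambda-1)^2$ near $1$, so $\mathrm{KL}\le\tfrac12 d(2\eta+\eta^2)^2\le 2.3\,d\eta^2$. Then $d_{\rm TV}\le\sqrt{1.15}\,\sqrt d\,\eta<\delta_{\rm TV}$.

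The main obstacle is the quadrature step: getting a bound uniform over $x\in[1,\kappa]$ with the right dependence on $h$, $M$ and $N$. There are two error sources. The discretization error of the trapezoid rule for the analytic integrand is exponentially small in $1/h$, controlled in a strip of width about $\pi/4$. The truncation errors decay like $e^{-Mh}$ and like $\sqrt\kappa\,e^{-Nh}$ (relative). If Lemma~\ref{lem:sinc-quadrature} already packages these estimates, this step is only bookkeeping. Otherwise I would bound the two tails explicitly by geometric sums.
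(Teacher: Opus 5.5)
Your proposal is correct and follows essentially the same route as the paper: you identify $X_j=(\Lambda+\alpha_jI)^{-1}Z$ so that $Y=r(\Lambda)Z$, apply the relative bound $|\sqrt{x}\,r(x)-1|\le\eta$ from Lemma~\ref{lem:sinc-quadrature}, combine the Gaussian KL formula with Pinsker, and count $q=M+N+1$ from \eqref{e_hmn}. The only slip is your constant $2.3$: for $\eta$ near $1/4$ the factor $\tfrac12(2+\eta)^2$ is about $2.53$ (the paper instead uses $(1+u)^2-1-2\log(1+u)\le 3u^2$), which does not matter because $\sqrt d\,\eta=\delta_{\rm TV}/4$ leaves ample room.
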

\begin{proof}
See Appendix~\ref{app:proof-exact-upper}.
\end{proof}

In contrast to the $\tilde{O}(\sqrt{\kappa})$ dependence arising from polynomial approximation under ordinary gradient access, 
Theorem~\ref{thm:exact-upper-short} shows that the smoothed-score oracle achieves only logarithmic dependence on $\kappa$.

\begin{algorithm}[t]
\caption{Exact rational sampler with independent smoothed-score queries}
\label{alg:exact-rational-sampler-independent}
Input: $\delta_{\rm TV}\in(0,1)$, $\kappa\ge 1$.
\begin{enumerate}
    \item Set
\[
    \eta
    =
    \frac{\delta_{\rm TV}}
    {8\sqrt d\,\log(C_0\sqrt d/\delta_{\rm TV})},
\]
    where $C_0$ was defined in \eqref{eq:C0-choice}. Compute
    $h$, $M$, $N$, $\mathcal J$, $\alpha_j$, $c_j$, and $L_h$
    as in Section~\ref{subsec:sinc-grid}. Set $\tau_j:=\alpha_j^{-1}$.

    \item For each $j\in\mathcal J$, independently draw
    \[
        Z_j\sim \mathcal N(0,I_d).
    \]

    \item For each $j\in\mathcal J$, query $s_{\tau_j}(Z_j)$ and form
    \[
        X_j
        =
        \tau_j Z_j+\tau_j^2s_{\tau_j}(Z_j).
    \]

    \item Output
    \[
        Y
        =
        \frac{1}{\sqrt{L_h}}
        \sum_{j\in\mathcal J}c_jX_j.
    \]
\end{enumerate}
\end{algorithm}

While Algorithm~\ref{alg:exact-rational-sampler} uses only smoothed score queries at one point $Z$,
an alternative approach in Algorithm~\ref{alg:exact-rational-sampler-independent} uses queries at independently sampled locations.
This variant leads to a covariance representation governed by the second approximation in Lemma~\ref{lem:sinc-quadrature}.
\begin{theorem}[Exact smoothed-score upper bound with independent queries]
\label{thm:exact-upper-independent}
Consider arbitrary $\kappa\ge 1$, $\delta_{\rm TV}\in(0,1)$,
a centered Gaussian target satisfying \eqref{e_target},
and assume the exact smoothed-score oracle.
Run Algorithm~\ref{alg:exact-rational-sampler-independent}.
Then
\[
    d_{\mathrm{TV}}\left(\mathcal L(Y),\pi\right)
    \leq
    \delta_{\rm TV}.
\]
Moreover, the number of exact smoothed-score queries satisfies
\[
    q
    =
    O\left(
        \left(
            \log\kappa+\log(e\sqrt d/\delta_{\rm TV})
        \right)
        \log(e\sqrt d/\delta_{\rm TV})
    \right),
\]
where $O(\cdot)$ hides universal constants.
\end{theorem}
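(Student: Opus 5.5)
We will follow the structure of the proof of Theorem~\ref{thm:exact-upper-short}, but work at the level of covariances rather than a single linear map. First we compute what each query returns. Since $s_{\tau}(y)=-(\Sigma+\tau I)^{-1}y$, we have
\[
X_j=\tau_j\bigl(I-\tau_j(\Sigma+\tau_j I)^{-1}\bigr)Z_j=\tau_j\Sigma(\Sigma+\tau_j I)^{-1}Z_j=(\Lambda+\alpha_j I)^{-1}Z_j .
\]
Because the $Z_j$ are independent, $Y=L_h^{-1/2}\sum_j c_jX_j$ is exactly centered Gaussian, with covariance
\[
\widehat\Sigma=\frac1{L_h}\sum_{j\in\mathcal J}c_j^2(\Lambda+\alpha_jI)^{-2}=r(\Lambda),\qquad r(x)=\frac1{L_h}\sum_j\frac{c_j^2}{(x+\alpha_j)^2}.
\]
This is a scalar function of $\Lambda$, so it commutes with $\Sigma=\Lambda^{-1}$. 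The task is therefore to show that $x\,r(x)$ is uniformly close to $1$ on $[1,\kappa]$. We take this from the second approximation in Lemma~\ref{lem:sinc-quadrature}. With $\alpha_j=e^{jh}$-type nodes, $c_j^2$ proportional to $h\,\alpha_j^{3/2}$ (or the analogous weights), and $L_h$ a normalizing constant, that approximation should read $\sup_{x\in[1,\kappa]}|x\,r(x)-1|\le\eta$. The underlying identity is $\int_0^\infty \frac{\alpha^{1/2}}{(x+\alpha)^2}d\alpha=\frac{\pi}{2}x^{-1/2}\cdot$ (a constant), and its substitution version in $u=\log\alpha$ gives $x^{-1}$ up to a constant absorbed into $L_h$. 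We simply invoke the lemma with the $\eta$ prescribed in Algorithm~\ref{alg:exact-rational-sampler-independent}.

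Next we convert this uniform spectral bound into total variation. Let $\lambda_i$ be the eigenvalues of $\Sigma^{-1/2}\widehat\Sigma\Sigma^{-1/2}=\Lambda r(\Lambda)$. Then $|\lambda_i-1|\le\eta\le 1/2$. By Pinsker (or the Gaussian TV bound $d_{\rm TV}\le \tfrac32\|\Sigma^{-1/2}\widehat\Sigma\Sigma^{-1/2}-I\|_F$ of Devroye--Mehrabian--Reddad),
\[
d_{\rm TV}\le\sqrt{\tfrac12\mathrm{KL}}=\tfrac12\Bigl(\sum_i(\lambda_i-1-\log\lambda_i)\Bigr)^{1/2}\le \tfrac12\sqrt{d}\,\eta\cdot O(1).
\]
This is far below $\delta_{\rm TV}$ given $\eta=\delta_{\rm TV}/(8\sqrt d\log(C_0\sqrt d/\delta_{\rm TV}))$. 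The extra logarithmic slack in $\eta$ is presumably there because Lemma~\ref{lem:sinc-quadrature}'s second approximation carries an error inflated by a factor like $\log(1/\eta)$, or because the normalization $L_h$ is only approximately known. We will track whatever form the lemma gives and check that $\eta$ absorbs it.

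Finally, for the query count we have $q=|\mathcal J|=M+N+1$. Sinc quadrature with step $h\asymp 1/\log(1/\eta)$ (exponential convergence $e^{-c/h}$) and truncation of the tails at $\alpha\lesssim\eta^{2}$ and $\alpha\gtrsim\kappa/\eta^{2}$ gives $M+N=O\bigl((\log\kappa+\log(1/\eta))/h\bigr)=O\bigl((\log\kappa+\log(1/\eta))\log(1/\eta)\bigr)$. Since $\log(1/\eta)=O(\log(e\sqrt d/\delta_{\rm TV}))$, because $\log\log$ is dominated by $\log$, the stated bound follows.

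The main obstacle is the quadrature step: showing that the squared weights produce uniform relative accuracy in $x\,r(x)$ over all of $[1,\kappa]$, including the endpoints where truncation matters. The first approach is to rely on Lemma~\ref{lem:sinc-quadrature}'s second statement. If that statement is only available in absolute rather than relative form, we will multiply by $x\le\kappa$, which forces $\log\kappa$ into the truncation but not into $h$. That is consistent with the claimed complexity.
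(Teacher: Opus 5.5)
Your overall route matches the paper's proof. You compute $X_j=(\Lambda+\alpha_jI)^{-1}Z_j$, so $Y$ is centered Gaussian with covariance $L_h^{-1}\sum_j c_j^2(\Lambda+\alpha_jI)^{-2}$. The covariance ratio in each eigendirection is $T(\lambda)=\frac{\lambda}{L_h}\sum_j c_j^2/(\lambda+\alpha_j)^2$. You then apply the Gaussian KL formula and Pinsker, and count $q=M+N+1$ as in Theorem~\ref{thm:exact-upper-short}.

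The gap is in the one quantitative step that distinguishes this theorem from the first one. You assert that the second estimate of Lemma~\ref{lem:sinc-quadrature} reads $\sup_{x\in[1,\kappa]}|x\,r(x)-1|\le\eta$ for your normalized $r$. It does not. The lemma controls the unnormalized sum, $|x\sum_j c_j^2/(x+\alpha_j)^2-L_h|\le 2\eta$. Since $L_h=2h/\pi^2=2/\log(C_0/\eta)$ is small, dividing by $L_h$ gives the relative error
\[
    |T(\lambda)-1|\le\rho:=\frac{2\eta}{L_h}=\eta\log(C_0/\eta),
\]
which is a factor $\log(C_0/\eta)>2$ larger than what you used. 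Your conclusion that the TV bound is ``far below $\delta_{\rm TV}$'' therefore rests on a false bound. The check you defer (``track whatever form the lemma gives'') is precisely the content of the proof, and it goes as follows:
\begin{itemize}
\item Put $r=\delta_{\rm TV}/\sqrt d$ and $L=\log(C_0/r)$, so that $\eta=r/(8L)$ and $\log(C_0/\eta)=L+\log(8L)$.
\item Hence $\rho=\frac{r}{8}\bigl(1+\log(8L)/L\bigr)\le r/2$, because $L\ge\log C_0>2$ gives $1+\log(8L)/L\le 4$.
\item In particular $\rho\le 1/2$, which is what you need to use $u-\log(1+u)\le u^2$.
\item Then $\mathrm{KL}\le\frac12 d\rho^2$ and $d_{\mathrm{TV}}\le\frac{\sqrt d}{2}\rho\le\delta_{\rm TV}/4$.
\end{itemize}
This is what the extra logarithm in $\eta$ is for. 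With the first algorithm's choice $\eta=\delta_{\rm TV}/(4\sqrt d)$, the same argument would only give $d_{\mathrm{TV}}\lesssim\delta_{\rm TV}\log(e\sqrt d/\delta_{\rm TV})$.

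Two smaller points. First, your heuristic for the weights is off. In fact $c_j^2=(4h^2/\pi^2)\alpha_j$, not $c_j^2\propto h\alpha_j^{3/2}$. The relevant identity is $\int_{\mathbb R}e^{2u}(x+e^{2u})^{-2}\,du=1/(2x)$, which is why the normalizer $L_h$ is of order $h$. Second, your fallback in the last paragraph would not work. If you only had an absolute bound and multiplied by $x\le\kappa$, you would need $\eta\lesssim\delta_{\rm TV}/(\kappa\sqrt d)$. That enters $h=\pi^2/\log(C_0/\eta)$ and gives $q\asymp(\log\kappa+\log(e\sqrt d/\delta_{\rm TV}))^2$, not the stated product form. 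The lemma is already stated with the factor $x$, so no such step is needed.
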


\begin{proof}
See Appendix~\ref{app:proof-exact-upper-independent}.
\end{proof}

\subsection{Coordinatewise quantized smoothed-score queries}
\label{subsec:coordinate-quantized-upper}

We next give a finite-bit version of the rational sampler.  Instead of using the transformed response
$X_j=\tau_jZ+\tau_j^2s_{\tau_j}(Z)$ exactly, we quantize the weighted contribution
$W_j=c_jX_j$ coordinatewise.  Since a quantized output is discrete, we add a final isotropic Gaussian
dither.

\paragraph{Coordinatewise scalar quantizer.}
For a clipping radius $R_{\mathrm{clip}}>0$ and bit depth $B\in\mathbb N$, let $K_B=2^B$,
$\Delta_B=2R_{\mathrm{clip}}/(K_B-1)$, and
$\mathcal G_B=\{-R_{\mathrm{clip}}+\ell\Delta_B:\ell=0,\dots,K_B-1\}$.  Define
$\operatorname{clip}(t)=\max\{-R_{\mathrm{clip}},\min\{t,R_{\mathrm{clip}}\}\}$ and let
$\mathsf q_{B,R_{\mathrm{clip}}}(t)$ be a nearest point in $\mathcal G_B$ to
$\operatorname{clip}(t)$, with ties broken arbitrarily.  For $w\in\mathbb R^d$, set
\begin{align}
\mathsf Q_{B,R_{\mathrm{clip}}}(w)
=(\mathsf q_{B,R_{\mathrm{clip}}}(w_1),\dots,\mathsf q_{B,R_{\mathrm{clip}}}(w_d)).
\end{align}
Given $\delta_{\rm TV}\in(0,1)$ and $\kappa\ge 1$,
we compute the parameters as indicated in 
Algorithm~\ref{alg:coordinate-quantized-rational-sampler}.
Our main result is as follows.

\begin{algorithm}[t]
\caption{Coordinatewise-quantized rational sampler with isotropic dithering}
\label{alg:coordinate-quantized-rational-sampler}

Input: $\delta_{\rm TV}\in(0,1)$, $\kappa\ge 1$.
\begin{enumerate}
\item 
Set
$\eta
    =
\frac{\delta_{\mathrm{TV}}}{12\sqrt d}$,
and compute $h$, $M$, $N$, $\mathcal{J}$, $\alpha_j$, $c_j$ as in Section~\ref{subsec:sinc-grid}.
Set $\tau_j:=\alpha_j^{-1}$,
$q=M+N+1$,
and $\sigma^2=\frac{\delta_{\rm TV}}{12\kappa\sqrt{d}}$.
Define
$$
    R_{\mathrm{clip}}
    =
    \frac{h}{\pi}
    \sqrt{2\log\left(\frac{6dq}{\delta_{\rm TV}}\right)}.
$$
Choose $B$ to be the smallest integer satisfying
$$
    2^B-1
    \geq
    \frac{q\sqrt d\,R_{\mathrm{clip}}}{\sigma\delta_{\rm TV}}.
$$
    \item Draw $Z\sim\mathcal N(0,I_d)$.
    \item For each $j\in\mathcal J$, query $s_{\tau_j}(Z)$, form
    $X_j=\tau_jZ+\tau_j^2s_{\tau_j}(Z)$, and set $W_j=c_jX_j$.
    \item Quantize coordinatewise: $\widehat W_j=\mathsf Q_{B,R_{\mathrm{clip}}}(W_j)$.
    \item Draw $G\sim\mathcal N(0,I_d)$ independently and output
    $$
        Y=\sum_{j\in\mathcal J}\widehat W_j+\sigma G.
    $$
\end{enumerate}
\end{algorithm}

\begin{theorem}[Coordinatewise quantization upper bound]
\label{thm:coordinate-quantized-upper}
Run Algorithm~\ref{alg:coordinate-quantized-rational-sampler}. Then
$$
    d_{\mathrm{TV}}\left(\mathcal L(Y),\pi\right)\leq \delta_{\rm TV}.
$$
Moreover, the total number of communicated bits $Q=dBq$ satisfies
$$
    Q
    \leq
    C d
    \left(
        \log(e\kappa)+\log(e\sqrt d/\delta_{\rm TV})
    \right)
    \log(e\sqrt d/\delta_{\rm TV})
    \left(
        \log(e\kappa)+\log(ed/\delta_{\rm TV})
    \right),
$$
where $C>0$ is a universal constant.  In particular, for fixed $d$ and fixed accuracy
$\delta_{\rm TV}$, one has $Q=O(d\log^2\kappa)$.
\end{theorem}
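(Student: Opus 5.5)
We prove the TV bound by comparing $Y$ with an intermediate variable, and then count bits directly from the parameter choices. Let $Y_0=\sum_{j\in\mathcal J}W_j=\sum_j c_jX_j$ be the unquantized output of Algorithm~\ref{alg:exact-rational-sampler}. By the same computation as in the proof of Theorem~\ref{thm:exact-upper-short}, $X_j=\tau_j(I-\tau_j(\Sigma+\tau_jI)^{-1})Z=\tau_j\Sigma(\Sigma+\tau_j I)^{-1}Z=(\Lambda+\alpha_jI)^{-1}Z$. Hence $Y_0=r(\Lambda)Z$, where $r(x)=\sum_j c_j(x+\alpha_j)^{-1}$ is the sinc-quadrature approximation of $x^{-1/2}$ from Lemma~\ref{lem:sinc-quadrature}, with relative accuracy $\eta=\delta_{\rm TV}/(12\sqrt d)$ on $[1,\kappa]$. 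We then use the triangle inequality
\[
d_{\rm TV}(\mathcal L(Y),\pi)\le d_{\rm TV}(\mathcal L(Y),\mathcal L(Y_0+\sigma G))+d_{\rm TV}(\mathcal L(Y_0+\sigma G),\mathcal N(0,\Sigma+\sigma^2I))+d_{\rm TV}(\mathcal N(0,\Sigma+\sigma^2 I),\pi),
\]
and aim for a bound of $\delta_{\rm TV}/3$ on each term.

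\emph{Gaussian terms.} The third term is the TV distance between two centered Gaussians. Since $\Sigma^{-1/2}(\Sigma+\sigma^2I)\Sigma^{-1/2}=I+\sigma^2\Lambda$ has eigenvalues in $[1,1+\sigma^2\kappa]$, the standard bound $d_{\rm TV}\le C\|A-I\|_F$ gives at most $C\sqrt d\,\sigma^2\kappa\le \delta_{\rm TV}/12$ by the choice of $\sigma^2$. For the second term, $Y_0+\sigma G\sim\mathcal N(0,r(\Lambda)^2+\sigma^2 I)$. Relative to $\Sigma+\sigma^2I$ (or more simply relative to $\Sigma$, then adding $\sigma^2I$, which only contracts TV by data processing), the eigenvalue ratios are $(\lambda^{1/2}r(\lambda))^2\in[(1-\eta)^2,(1+\eta)^2]$. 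This gives TV at most $C\sqrt d\,\eta$, which is at most $\delta_{\rm TV}/3$ with the stated $\eta$, matching the argument of Theorem~\ref{thm:exact-upper-short}.

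\emph{Quantization term.} This is the main obstacle. Couple $Y$ and $Y_0+\sigma G$ through the same $Z,G$. Conditional on $Z$, these are $\mathcal N(\sum_j\widehat W_j,\sigma^2I)$ and $\mathcal N(Y_0,\sigma^2 I)$, so the TV distance is at most $\mathbb E\min\{1,\|\sum_j(\widehat W_j-W_j)\|/(2\sigma)\}$. We split according to the clipping event $E=\{\exists j,k:\ |W_{j,k}|>R_{\rm clip}\}$.
\begin{itemize}
\item \emph{Off $E$.} Each coordinate error is at most $\Delta_B/2=R_{\rm clip}/(2^B-1)$, so $\|\sum_j(\widehat W_j-W_j)\|\le q\sqrt d\,R_{\rm clip}/(2^B-1)\le\sigma\delta_{\rm TV}$ by the choice of $B$. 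The contribution is therefore at most $\delta_{\rm TV}/2$. (If this is too loose, we enlarge the constant in $B$'s definition; this changes only constants.)
\item \emph{On $E$.} We need $\mathbb P(E)\le\delta_{\rm TV}/6$. Here $W_{j,k}$ is centered Gaussian with variance $e_k^\top c_j^2(\Lambda+\alpha_j I)^{-2}e_k\le c_j^2/(1+\alpha_j)^2$. The step I expect to be delicate is verifying from the explicit sinc weights (with $c_j\propto h\,e^{s_j/2}/\pi$ and $\alpha_j=e^{s_j}$) that $c_j/(1+\alpha_j)\le h/\pi$ uniformly. This reduces to $e^{s/2}/(1+e^{s})\le 1/2$, so the standard deviation is at most $h/\pi$. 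A Gaussian tail bound plus a union bound over $dq$ entries then gives $\mathbb P(E)\le dq\cdot 2e^{-R_{\rm clip}^2\pi^2/(2h^2)}=\delta_{\rm TV}/3$, adjusting constants as needed.
\end{itemize}
Summing the three terms gives $\delta_{\rm TV}$, possibly after retuning constants.

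\emph{Bit count.} We have $Q=dBq$. From Section~\ref{subsec:sinc-grid} and Theorem~\ref{thm:exact-upper-short}, $q=O((\log\kappa+\log(e\sqrt d/\delta_{\rm TV}))\log(e\sqrt d/\delta_{\rm TV}))$, and $h=O(1/\log(e\sqrt d/\delta_{\rm TV}))\le O(1)$. It follows that
\[
B\le 1+\log_2\frac{q\sqrt d R_{\rm clip}}{\sigma\delta_{\rm TV}}=O\!\left(\log q+\log\frac d{\delta_{\rm TV}}+\log\kappa+\log\log\frac{dq}{\delta_{\rm TV}}\right),
\]
using $\sigma^{-1}=(12\kappa\sqrt d/\delta_{\rm TV})^{1/2}$. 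Since $\log q$ is dominated by $\log(e\kappa)+\log(ed/\delta_{\rm TV})$, this gives $B=O(\log(e\kappa)+\log(ed/\delta_{\rm TV}))$. Multiplying $d$, $q$ and $B$ yields the stated bound on $Q$, and for fixed $d$ and $\delta_{\rm TV}$ this is $O(d\log^2\kappa)$.
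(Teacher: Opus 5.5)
Your argument is the paper's proof in substance. Both use the resolvent identity $X_j=(\Lambda+\alpha_jI)^{-1}Z$, and both bound the quantization term conditionally on $Z$ via the shift bound $\|E_{\rm qnt}\|_2/(2\sigma)$, split on the clipping event. Both control clipping with the operator-norm bound $c_j\|(\Lambda+\alpha_jI)^{-1}\|_{\rm op}\le h/\pi$ plus a union bound over $dq$ entries, and the bit counts are the same. The only structural difference is in the Gaussian comparison. You pass through $\mathcal N(0,\Sigma+\sigma^2I)$ and use data processing to strip the dither. The paper instead compares $\mathcal N(0,r(\Lambda)^2+\sigma^2I)$ with $\pi$ in one step, using the ratio $T_\sigma(\lambda)=\lambda r(\lambda)^2+\sigma^2\lambda$ with $|T_\sigma-1|\le 3\eta+\kappa\sigma^2$, which gives $\delta_{\rm TV}/6$. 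Either route is fine.

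What needs fixing is your constant accounting. The theorem is about the algorithm with its constants fixed, so you may not ``retune constants'' or enlarge $B$. Your stated budget of $\delta_{\rm TV}/3$ per term also does not match what the algorithm delivers.
\begin{itemize}
\item With the given $R_{\rm clip}$ you get exactly $\mathbb P(E)\le 2dq\cdot\delta_{\rm TV}/(6dq)=\delta_{\rm TV}/3$, not $\delta_{\rm TV}/6$.
\item With the given $B$, the off-$E$ part is $\delta_{\rm TV}/2$.
\item So the quantization term is $5\delta_{\rm TV}/6$, and your two Gaussian terms must total at most $\delta_{\rm TV}/6$.
\end{itemize}
They do, but only with explicit constants; an unspecified $C$ in $C\|A-I\|_F$ is not enough. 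Use the Gaussian KL formula and Pinsker:
\begin{itemize}
\item The second term is at most $\frac{\sqrt3}{2}\sqrt d\,\eta=\frac{\sqrt3}{24}\delta_{\rm TV}$, as in the exact-query proof.
\item The third term is at most $\frac{\sqrt d}{2}\kappa\sigma^2=\frac{1}{24}\delta_{\rm TV}$, using $u-\log(1+u)\le u^2$ for $0\le u\le\kappa\sigma^2\le 1/12$.
\end{itemize}
Their sum $\frac{1+\sqrt3}{24}\delta_{\rm TV}$ is below $\delta_{\rm TV}/6$. With this allocation your proof is complete and reproduces the paper's split $\delta_{\rm TV}/6+\delta_{\rm TV}/3+\delta_{\rm TV}/2$.
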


\begin{proof}
See Appendix~\ref{app:proof-coordinatewise}.
\end{proof}

\section{A general method for  lower bounds via channel synthesis}
\label{sec:lb-channel-synthesis}

We introduce a general approach for lower bounding the query complexity using tools developed in information theory for the channel synthesis problem (also known as channel simulation or the reverse-Shannon theorem)
\citep{bennett2002entanglement,cuff2013distributed}.
Let
\(\mathcal C=\{p_x:x\in\mathcal X\}\) be a class of target distributions.  Identifying
\(p_x\) with the output law \(P_{Y|X=x}\), the class \(\mathcal C\) becomes a channel
\(P_{Y|X}\).  A sampler that receives only \(Q\) bits about \(x\) and outputs
\(\widehat Y\) with law close to \(P_{Y|X=x}\) is therefore a \(Q\)-bit simulator for
this channel.  

\paragraph{Channel synthesis game.}
Alice knows the input \(x\in\mathcal X\), while Bob must output a sample with law close
to \(P_{Y|X=x}\).  Alice and Bob share common randomness \(Z\), independent of \(x\);
one may take \(Z\) to be an infinite sequence of independent fair bits, from which all
auxiliary randomness can be generated.  Alice sends a transcript
\(T=T(x,Z)\in\mathcal T\), where \(|\mathcal T|\leq 2^Q\), and Bob outputs
\(\widehat Y=\widehat Y(T,Z)\).  We do not need separate private randomness in this
unlimited-common-randomness formulation: any such randomness can be generated from
unused coordinates of \(Z\), and revealing it to both terminals can only make simulation
easier.

\begin{figure}[t]
\centering
\begin{tikzpicture}[
    node distance=1.15cm and 1.45cm,
    box/.style={
        draw,
        rounded corners,
        align=center,
        minimum height=0.85cm,
        minimum width=1.65cm
    },
    smallbox/.style={
        draw,
        rounded corners,
        align=center,
        minimum height=0.75cm,
        minimum width=1.45cm
    },
    arr/.style={-{Latex[length=2mm]}, thick}
]

\node[box] (xtrue) {input\\$x$};
\node[box, right=of xtrue] (chan) {true channel\\$P_{Y|X}$};
\node[box, right=of chan] (ytrue) {output\\$Y$};

\draw[arr] (xtrue) -- (chan);
\draw[arr] (chan) -- (ytrue);

\node[above=0.12cm of chan] {\small Original channel};

\node[box, below=1.65cm of xtrue] (alice) {Alice\\knows $x$};
\node[smallbox, right=of alice] (msg) {transcript\\$T=T(x,Z)\in\mathcal T$\\$|\mathcal T|\le 2^Q$};
\node[box, right=of msg] (bob) {Bob\\outputs\\$\widehat Y(T,Z)$};

\node[smallbox, below=0.65cm of msg] (cr) {common randomness\\$Z\perp x$};

\draw[arr] (alice) -- (msg);
\draw[arr] (msg) -- (bob);
\draw[arr, dashed] (cr) -- (alice);
\draw[arr, dashed] (cr) -- (bob);

\node[above=0.12cm of msg] {\small Channel synthesis};

\node[align=center, below=0.25cm of bob] (tv) {
    \small \(d_{\mathrm{TV}}(P_{\widehat Y|X=x},P_{Y|X=x})\le \delta_{\mathrm{TV}}\)\\
    \small for every \(x\)
};

\end{tikzpicture}
\caption{Channel-synthesis view of finite-information sampling.  The only information
about the input \(x\) reaching Bob is the transcript \(T=T(x,Z)\in\mathcal T\).}
\label{fig:channel-synthesis}
\end{figure}

Channel synthesis is well-studied in information theory.
Roughly speaking, the number of required bits in the transcript is the channel capacity, 
where its achievability direction is often established via the soft-covering lemma (channel resolvability) or rejection sampling \cite{cuff2013distributed,liu2016e_,liu2018rejection}.
However, these achievability constructions are not practical samplers, due to high computation complexity.
In this work, we instead focus on the converse (lower bound) direction.
The key observation is the following duality between channel coding (Shannon's theorem) and channel simulation (reverse Shannon theorem);
a similar idea was previously exploited in \cite{bennett2014quantum}.

\begin{theorem}[Channel-synthesis converse for TV sampling]
\label{thm:channel-synthesis-converse}
Let \(P_{Y|X}\) be a channel.  Suppose there is a simulator with common randomness
\(Z\), transcript \(T=T(X,Z)\in\mathcal T\) with \(|\mathcal T|\leq 2^Q\), and output
\(\widehat Y=\widehat Y(T,Z)\), such that for every input \(x\),
$$
    d_{\mathrm{TV}}
    \left(
        P_{\widehat Y|X=x},
        P_{Y|X=x}
    \right)
    \leq
    \delta_{\mathrm{TV}} .
$$
Let \(\epsilon(k')\) be an achievable average error probability for transmitting a
\(k'\)-bit message over the original channel \(P_{Y|X}\).  Then, for every \(k'\) with
\(\delta_{\mathrm{TV}}+\epsilon(k')<1\),
$$
    Q
    \geq
    k'
    -
    \log_2
    \frac{1}{1-\delta_{\mathrm{TV}}-\epsilon(k')}.
$$
\end{theorem}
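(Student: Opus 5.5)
The argument composes the simulator with a channel code and transmits the message through the transcript instead of through the channel. Fix \(k'\) and an achievable code for \(P_{Y|X}\): an encoder \(f:\{1,\dots,2^{k'}\}\to\mathcal X\) and a decoder \(g\) whose average error, when a uniform message \(W\) is sent through \(P_{Y|X}\) and decoded from the output \(Y\), is at most \(\epsilon(k')\). Consider the following composite system, with \(W\) uniform and independent of the common randomness \(Z\):
\begin{itemize}
\item Alice sets \(x=f(W)\) and sends \(T=T(f(W),Z)\).
\item Bob computes \(\widehat Y=\widehat Y(T,Z)\).
\item Bob outputs \(\widehat W=g(\widehat Y)\).
\end{itemize}

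The first step bounds the success probability of this composite system from below. Fix a message \(w\). The event \(\{g(\cdot)=w\}\) has probability at least \(1-\epsilon_w\) under \(P_{Y|X=f(w)}\), where \(\epsilon_w\) is the conditional error of the channel code. Since \(d_{\mathrm{TV}}(P_{\widehat Y|X=f(w)},P_{Y|X=f(w)})\le\delta_{\mathrm{TV}}\), the same event has probability at least \(1-\epsilon_w-\delta_{\mathrm{TV}}\) under the simulated law. This uses only the pointwise TV guarantee for each input, so it holds for every \(w\). Averaging over \(w\) gives
\[
\Pr[\widehat W=W]\ \ge\ 1-\delta_{\mathrm{TV}}-\epsilon(k').
\]

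The second step bounds the same success probability from above by counting transcripts. Conditional on \(Z=z\), the decision \(\widehat W\) is a deterministic function of \(T\in\mathcal T\), so it takes at most \(|\mathcal T|\le 2^Q\) distinct values. Because \(W\) is uniform on \(2^{k'}\) values and independent of \(Z\), we get
\[
\Pr[\widehat W=W\mid Z=z]\ \le\ \frac{2^Q}{2^{k'}} .
\]
To see this, note that for each transcript value \(t\), at most one message \(w\) with \(T(f(w),z)=t\) can be decoded correctly, namely the one equal to \(\widehat W(t,z)\). Hence at most \(2^Q\) messages are decoded correctly, each of which has probability \(2^{-k'}\). Averaging over \(z\) keeps this bound.

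Combining the two steps gives \(2^{Q-k'}\ge 1-\delta_{\mathrm{TV}}-\epsilon(k')\). When the right-hand side is positive, taking \(\log_2\) yields the claimed bound on \(Q\).

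The main point needing care is the upper bound in the second step, because \(W\) and \(Z\) must be independent and the decoder may use \(Z\). Conditioning on \(Z\) handles this: for fixed \(z\), everything is deterministic in \(W\), and the counting argument applies directly. A secondary point is that \(\epsilon(k')\) is an average error over messages rather than a maximal one. This causes no problem, since the TV comparison in the first step is done message by message and then averaged. If the channel code uses randomized encoding or decoding, its randomness can be absorbed into \(Z\), and the same argument goes through.
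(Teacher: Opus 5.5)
Your proposal is correct and follows essentially the same route as the paper. You compose the channel code with the simulator, lower bound the decoding success by $1-\delta_{\mathrm{TV}}-\epsilon(k')$ message by message via the per-input TV guarantee, and upper bound it by $2^{Q-k'}$ by conditioning on $Z=z$ and counting transcripts; the paper writes that counting step as $\frac1L\sum_{t\in\mathcal T}\mathbb P(T=t\mid M=\psi(t,z),Z=z)\le|\mathcal T|/L$, which also covers a transcript that is random given $(M,Z)$, whereas your injectivity count uses that $T=T(f(w),z)$ is deterministic, as it is in the stated model.
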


\begin{proof}
See Appendix~\ref{app:proof-channel-synthesis}.
\end{proof}

The following special case immediately follows, which is often sufficient for our purpose:
\begin{corollary}[Fixed-error version]
\label{cor:channel-synthesis-fixed-error}
Fix \(\delta_{\mathrm{TV}}\in[0,1)\), and set
\(\rho=(1-\delta_{\mathrm{TV}})/2\).  If the original channel admits a \(k'\)-bit code
with error at most \(\rho\), then every \(\delta_{\mathrm{TV}}\)-accurate simulator
satisfies
$$
    Q
    \geq
    k'
    -
    \log_2\frac{2}{1-\delta_{\mathrm{TV}}}.
$$
\end{corollary}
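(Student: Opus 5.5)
This follows directly from Theorem~\ref{thm:channel-synthesis-converse} by choosing \(\epsilon(k')=\rho\). First, if the original channel admits a \(k'\)-bit code whose average error probability is at most \(\rho=(1-\delta_{\mathrm{TV}})/2\), then \(\epsilon(k')=\rho\) is an achievable average error probability for transmitting \(k'\) bits. A code with error at most \(\rho\) in the maximal or any stronger sense also has average error at most \(\rho\), so this applies whichever error criterion the hypothesis uses.

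Next, I check the side condition of the theorem. Since \(\delta_{\mathrm{TV}}<1\),
\[
\delta_{\mathrm{TV}}+\epsilon(k')=\delta_{\mathrm{TV}}+\tfrac{1-\delta_{\mathrm{TV}}}{2}=\tfrac{1+\delta_{\mathrm{TV}}}{2}<1 .
\]
Moreover,
\[
1-\delta_{\mathrm{TV}}-\epsilon(k')=\tfrac{1-\delta_{\mathrm{TV}}}{2}>0 .
\]

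Substituting into the bound of Theorem~\ref{thm:channel-synthesis-converse} gives
\[
Q\ge k'-\log_2\frac{1}{(1-\delta_{\mathrm{TV}})/2}=k'-\log_2\frac{2}{1-\delta_{\mathrm{TV}}},
\]
which is the claim.

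There is no real obstacle here. The only point to check is that the hypothesized code's error notion implies the average-error achievability used in the theorem, and that implication is immediate.
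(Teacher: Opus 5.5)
Your proposal is correct and matches the paper, which presents the corollary as an immediate specialization of Theorem~\ref{thm:channel-synthesis-converse} with $\epsilon(k')=\rho$, so that $1-\delta_{\mathrm{TV}}-\rho=(1-\delta_{\mathrm{TV}})/2>0$. Your checks of the side condition and of the error criterion (any code with error at most $\rho$ supplies the average-error bound the theorem's proof uses) are the only verifications needed.
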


The duality of channel coding and channel synthesis in Theorem~\ref{thm:channel-synthesis-converse} and Corollary~\ref{cor:channel-synthesis-fixed-error}
provides a method of establishing lower bounds for channel synthesis using achievability of channel coding.
This turns an impossibility bound proof to an achievability construction.

\paragraph{Advantage of the duality approach over Fano.}
Fano-based lower bounds reduce sampling to estimating a parameter, such as the mode of the distribution \cite{lu2023upper,chewi2022query}.  
Channel synthesis
instead treats the class \(\{p_x:x\in\mathcal X\}\) directly as a channel and asks how
many bits are needed to simulate one draw from \(P_{Y|X=x}\).

The advantage is most transparent in the large-TV-error regime
\[
    \delta_{\mathrm{TV}}=1-\xi,
    \qquad
    \xi=\exp(-a d)
\]
or, equivalently up to constants in \(a\), \(\xi=2^{-ad}\).  In this regime, a
Fano-style reduction becomes essentially trivial.  Indeed, if the original channel has
a \(2^{k'}\)-message code with error at most \(\xi/2\), then a
\(\delta_{\mathrm{TV}}\)-accurate simulator induces a decoder from the simulator
transcript with success probability at least \(\xi/2\).  Fano then gives only
\(Q\geq(\xi/2)k'-h_2(\xi/2)\), losing a multiplicative factor of order
\(\xi=\exp(-ad)\).

By contrast, Corollary~\ref{cor:channel-synthesis-fixed-error} gives
\[
    Q
    \geq
    k'-\log_2(2/\xi).
\]
Thus the loss is additive rather than multiplicative.  When \(\xi=2^{-ad}\), this
additive loss is only \(ad+1\).  Theorem~\ref{thm:fixed-kappa-large-d} exploits this
gap: for fixed \(\kappa>1\) and sufficiently small \(a>0\), the channel-synthesis
converse still yields a nontrivial \(Q=\Omega_\kappa(d)\) lower bound, whereas the
corresponding Fano lower bound is exponentially suppressed.  
This idea was previously exploited by \cite{bennett2014quantum} to prove strong converses for channel coding, but here we use it to establish strong converses for channel simulation.

\paragraph{Connection to query lower bounds.}
In a finite-bit oracle model, the transcript is precisely the collection of bits returned
by the oracle, so the theorem applies with \(Q\) equal to the total bit budget.  An
ideal real-valued query is not finite-bit, but any finite-precision implementation
quantizes or noises the response.  For instance, if each coordinate of a \(d\)-dimensional
smoothed-score response is represented to precision \(\varepsilon_{\mathrm{qnt}}\)
over a bounded dynamic range, then one query reveals at most
\(O(d\log(1/\varepsilon_{\mathrm{qnt}}))\) bits.

\section{Capacity and query lower bound for the Gaussian model}
\label{sec:capacity-calculation}

In this section, 
we specialize Corollary~\ref{cor:channel-synthesis-fixed-error} to the case of the Gaussian query model, 
and derive matching lower bounds for the number of query bits.
Let
\[
    \mathcal C_d(\kappa)
    =
    \left\{
        \mathcal N(0,\Sigma):
        \frac1\kappa I\preceq \Sigma\preceq I
    \right\}.
\]
Thus the covariance eigenvalues lie in \([1/\kappa,1]\).  We view
\(\mathcal C_d(\kappa)\) as a channel whose input is a choice of covariance matrix and
whose output is one Gaussian sample.

Let \(\epsilon_{\mathrm G}(k')\) denote the minimum achievable one-shot average decoding
error for transmitting a \(k'\)-bit message over this Gaussian covariance channel.  In
other words, a code chooses covariance matrices
\(\Sigma_1,\dots,\Sigma_{2^{k'}}\) with
\(\mathcal N(0,\Sigma_m)\in\mathcal C_d(\kappa)\), observes
\(Y\sim\mathcal N(0,\Sigma_M)\) for a uniform message \(M\), and decodes \(M\) from
\(Y\).

Let \(C_{2,d}(\kappa)\) denote the one-sample capacity in bits:
\[
    C_{2,d}(\kappa)
    =
    \sup_\Pi I(\Theta;Y).
\]
Here \(\Theta\) is a random covariance index, \(\Pi\) is an arbitrary prior distribution
over admissible covariance matrices, and
\[
    Y\mid\Theta\sim\mathcal N(0,\Sigma_\Theta),
    \qquad
    \mathcal N(0,\Sigma_\Theta)\in\mathcal C_d(\kappa).
\]
Equivalently, \(\Sigma_\Theta\) is a random covariance matrix drawn from the prior
\(\Pi\), and \(C_{2,d}(\kappa)\) is the supremum of the mutual information between this
random covariance choice and one sample drawn from the corresponding Gaussian.

Shannon's channel coding theorem states that the channel capacity \(C_{2,d}(\kappa)\) characterizes the asymptotic communication rate through a growing number of i.i.d.\ copies of the channel.
While the capacity provides good intuitions about the query complexity,
the more relevant quantity for our purpose is the one-shot channel coding error function $\epsilon_G(\cdot)$.
We will directly upper bound $\epsilon_G(\cdot)$ by a random coding argument,
which implies a capacity lower bound.

\begin{theorem}[One-shot coding bound for covariance-bounded Gaussians]
\label{thm:capacity-random-low-rank}
For every fixed \(\rho\in(0,1)\), there are constants \(c_\rho>0\) and
\(\kappa_\rho<\infty\), independent of \(d\), such that for all \(d\ge2\) and all
\(\kappa\ge\kappa_\rho\),
$$
    \epsilon_{\mathrm G}(k')
    \leq
    \rho
    \qquad
    \text{for every }
    k'\leq c_\rho d\log_2\kappa .
$$
Consequently, there exist universal constants \(c>0\) and \(\kappa_0<\infty\) such that
for all \(d\ge2\) and all \(\kappa\ge\kappa_0\),
$$
    C_{2,d}(\kappa)
    \geq
    c\,d\log_2\kappa .
$$
\end{theorem}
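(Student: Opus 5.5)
The plan is a random coding argument over random low-rank (projection-type) covariances. First fix an integer \(r = \lfloor d/2\rfloor\) (or \(r=\lceil \beta d\rceil\) for a small constant \(\beta\)). Each codeword is
\[
\Sigma_m=\kappa^{-1}I+(1-\kappa^{-1})P_m ,
\]
where \(P_m\) is the orthogonal projection onto an \(r\)-dimensional subspace \(V_m\) drawn independently from the Haar measure on the Grassmannian \(G(d,r)\). Every \(\Sigma_m\) is admissible. A single sample \(Y\sim\mathcal N(0,\Sigma_m)\) is a vector with an \(O(1)\)-size component in \(V_m\) and a tiny component of size \(\kappa^{-1/2}\) in \(V_m^\perp\). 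Intuitively \(Y\) points into \(V_m\) up to an angle of order \(\kappa^{-1/2}\). The number of subspaces distinguishable at that resolution is roughly \(\kappa^{c\,r}\), because the metric entropy of the Grassmannian at scale \(\varepsilon\) is \(\asymp r(d-r)\log(1/\varepsilon)\). A single vector, however, carries only about \(d\) real directions, so we aim for \(\log|\text{codebook}|\asymp d\log\kappa\) rather than \(d^2\log\kappa\).

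The decoder is the maximum-likelihood rule, or more conveniently a threshold rule. Declare \(\hat m\) if \(\|(I-P_{\hat m})Y\|^2\le \gamma\), with \(\gamma=2(d-r)/\kappa\), and declare an error if zero or several indices pass. The error analysis has two parts.

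\emph{True codeword fails the test.} Under the true \(m\), \(\|(I-P_m)Y\|^2\sim\kappa^{-1}\chi^2_{d-r}\). By chi-square concentration (and \(d\ge 2\)), this probability is at most a constant that can be driven below \(\rho/2\), taking \(\gamma=C_\rho(d-r)/\kappa\) if necessary.

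\emph{A wrong codeword passes the test.} Fix \(m'\neq m\) and condition on \(Y\), writing \(u=Y/\|Y\|\). Since \(V_{m'}\) is Haar and independent of \(Y\), the quantity \(\|(I-P_{m'})u\|^2\) has a \(\mathrm{Beta}((d-r)/2,r/2)\) law. Also \(\|Y\|^2\ge r/4\) with high probability. Hence
\[
\Pr\bigl(\|(I-P_{m'})Y\|^2\le\gamma\bigr)\le \Pr\bigl(\mathrm{Beta}\le 4\gamma/r\bigr)\lesssim \bigl(C'/\kappa\bigr)^{(d-r)/2},
\]
using the small-ball density bound of the Beta distribution, \(\Pr(\mathrm{Beta}(a,b)\le t)\le \binom{a+b}{a}t^{a}\)-type estimates.

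A union bound over \(2^{k'}\) wrong codewords then gives average error at most
\[
\rho/2 + 2^{k'}(C'/\kappa)^{(d-r)/2}+\Pr(\|Y\|^2<r/4).
\]
This is at most \(\rho\) provided \(k'\le \tfrac{d-r}{2}\log_2(\kappa/C')-O_\rho(1)\). For \(\kappa\ge\kappa_\rho\), this is implied by \(k'\le c_\rho d\log_2\kappa\). Averaging over the random codebook produces a deterministic code, which proves the first claim.

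The capacity statement follows by Fano's inequality. Take \(\rho=1/2\) and a uniform prior on the \(2^{k'}\) codewords, which gives
\[
I(\Theta;Y)\ge (1-\rho)k'-1\ge c\,d\log_2\kappa
\]
after enlarging \(\kappa_0\) to absorb the \(-1\).

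The main obstacle is the wrong-codeword small-ball estimate with constants uniform in \(d\). The Beta tail bound must give \((C/\kappa)^{\Omega(d)}\) with \(C\) not growing with \(d\), and the conditioning on \(\|Y\|\) needs care. If the direct Beta estimate is messy, I would instead bound \(\Pr(\|(I-P)u\|^2\le t)\) via the Gaussian representation \(u=g/\|g\|\), where \(\|(I-P)g\|^2\sim\chi^2_{d-r}\). The bound \(\Pr(\chi^2_{k}\le sk)\le (es)^{k/2}\) gives the needed exponent directly, combined with \(\|g\|^2\le 2d\) with high probability.
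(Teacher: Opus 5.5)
Your proposal is essentially the paper's proof: a random codebook of Haar-random rank-$r$ subspaces with $\Sigma_U=P_U+\kappa^{-1}(I-P_U)$, the Beta small-ball (tube) bound $\lesssim (C/\kappa)^{(d-r)/2}$ for each wrong subspace, a union bound averaged over the codebook, and Fano for the capacity. Your choice $r=\Theta(d)$ (the paper takes $r=\max\{1,\lfloor d/\log\kappa\rfloor\}$) and your threshold decoder (the paper decodes by nearest normalized distance) change only constants.

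Two small repairs are needed. First, the event $\|Y\|^2\ge r/4$ does not have high probability when $r=1$ (e.g.\ $d\in\{2,3\}$), so use a $\rho$-dependent level $a_\rho^2 r$, as the paper does. Second, your fallback truncation at $\|g\|^2\le 2d$ fails as stated, because its $e^{-cd}$ failure probability gets multiplied by $2^{k'}\approx\kappa^{c_\rho d}$ in the union bound; keep the exact Beta estimate, or truncate at level $\asymp d\log\kappa$ instead.
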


\begin{proof}
See Appendix~\ref{app:proof-random-low-rank}.
\end{proof}

We can now provide our key results on the query complexity lower bounds:

\begin{theorem}[Bit lower bound for condition-number-bounded Gaussians]
\label{thm:gaussian-bit-lower-fixed-delta}
Fix \(\delta_{\mathrm{TV}}\in[0,1)\). Suppose a finite-bit sampler, with arbitrary
common randomness, uses a transcript of at most \(Q\) bits and outputs \(\widehat Y\)
such that
\[
    \sup_{\pi\in\mathcal C_d(\kappa)}
    d_{\mathrm{TV}}\left(\mathcal L(\widehat Y),\pi\right)
    \leq
    \delta_{\mathrm{TV}}.
\]
Then there are constants \(c_{\delta_{\mathrm{TV}}}>0\) and
\(\kappa_{\delta_{\mathrm{TV}}}<\infty\), independent of \(d\), such that for all
\(d\ge2\) and all \(\kappa\ge\kappa_{\delta_{\mathrm{TV}}}\),
\[
    Q
    \geq
    c_{\delta_{\mathrm{TV}}}\,d\log_2\kappa .
\]
\end{theorem}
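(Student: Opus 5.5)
The plan is to combine Corollary~\ref{cor:channel-synthesis-fixed-error} with the one-shot coding bound of Theorem~\ref{thm:capacity-random-low-rank}. We fix \(\delta_{\mathrm{TV}}\in[0,1)\) and set \(\rho=(1-\delta_{\mathrm{TV}})/2\in(0,1/2]\). The first step is to observe that the finite-bit sampler is a channel simulator for the channel \(x\mapsto \mathcal N(0,\Sigma_x)\) indexed by \(\mathcal C_d(\kappa)\). The only information about the target reaching the output is the oracle transcript, whose total length is at most \(Q\) bits. This holds even with adaptive encoders and bit budgets \(B_t\): the transcript \(T\) is a function of the target and the common randomness \(Z\), and it lies in a set of size at most \(2^Q\). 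The output \(\widehat Y\) is a function of \(T\) and \(Z\). The hypothesis
\[
    \sup_{\pi\in\mathcal C_d(\kappa)} d_{\mathrm{TV}}(\mathcal L(\widehat Y),\pi)\le\delta_{\mathrm{TV}}
\]
is exactly the per-input TV guarantee required in Theorem~\ref{thm:channel-synthesis-converse}.

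One technicality should be stated here. Variable-length adaptive transcripts with \(\sum_t B_t\le Q\) must be counted as a prefix-free set of strings. If \(Q\) bounds the total bits along every path, the number of distinct transcripts is at most \(2^{Q+1}\); alternatively, one pads each transcript to length exactly \(Q\), giving at most \(2^Q\). The padding or \(+1\) only changes an additive constant, which is absorbed later.

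The second step applies Theorem~\ref{thm:capacity-random-low-rank} with this \(\rho\). It gives constants \(c_\rho>0\) and \(\kappa_\rho\), independent of \(d\), such that for \(d\ge2\) and \(\kappa\ge\kappa_\rho\) there is a code with \(k'=\lfloor c_\rho d\log_2\kappa\rfloor\) bits and average error at most \(\rho\). Corollary~\ref{cor:channel-synthesis-fixed-error} then yields
\[
    Q\ \ge\ \lfloor c_\rho d\log_2\kappa\rfloor-\log_2\frac{2}{1-\delta_{\mathrm{TV}}}
    \ \ge\ c_\rho d\log_2\kappa-1-\log_2\frac{2}{1-\delta_{\mathrm{TV}}}.
\]

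The final step absorbs the additive loss, and this is where a little care is needed. The lost term \(1+\log_2\frac{2}{1-\delta_{\mathrm{TV}}}\) depends only on \(\delta_{\mathrm{TV}}\). Because \(d\ge2\), we have \(d\log_2\kappa\ge2\log_2\kappa\). We therefore choose \(\kappa_{\delta_{\mathrm{TV}}}\ge\kappa_\rho\) large enough that
\[
    \tfrac12 c_\rho\cdot 2\log_2\kappa\ \ge\ 1+\log_2\frac{2}{1-\delta_{\mathrm{TV}}}
\]
for all \(\kappa\ge\kappa_{\delta_{\mathrm{TV}}}\). This gives \(Q\ge \tfrac12 c_\rho d\log_2\kappa\), so we may take \(c_{\delta_{\mathrm{TV}}}=c_\rho/2\). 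Both constants depend only on \(\delta_{\mathrm{TV}}\), through \(\rho\), and not on \(d\).

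The main substance, the one-shot random-coding bound, is already supplied by Theorem~\ref{thm:capacity-random-low-rank}. The only delicate point is therefore the reduction in the first step: adaptivity and arbitrary encoders must not increase the effective transcript alphabet beyond \(2^{O(Q)}\), and the oracle's answers must be deterministic functions of the target and the query, so that all other randomness can be placed in the common randomness \(Z\).
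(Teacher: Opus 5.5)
Your proposal is correct and follows the paper's proof essentially step for step. You set \(\rho=(1-\delta_{\mathrm{TV}})/2\), take the \(\lfloor c_\rho d\log_2\kappa\rfloor\)-bit code from Theorem~\ref{thm:capacity-random-low-rank}, apply Corollary~\ref{cor:channel-synthesis-fixed-error}, and absorb the \(\delta_{\mathrm{TV}}\)-dependent additive loss by enlarging \(\kappa_{\delta_{\mathrm{TV}}}\) using \(d\ge 2\), which gives \(c_{\delta_{\mathrm{TV}}}=c_\rho/2\). Your extra discussion of adaptive variable-length transcripts spells out something the paper leaves implicit, and it is harmless. In fact, for fixed common randomness those transcripts form a prefix-free set of strings of length at most \(Q\), so Kraft's inequality already gives at most \(2^Q\) of them without padding or the \(+1\).
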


\begin{proof}
Set
\[
    \rho=\frac{1-\delta_{\mathrm{TV}}}{2}.
\]
By Theorem~\ref{thm:capacity-random-low-rank}, there are constants
\(c_\rho>0\) and \(\kappa_\rho<\infty\), independent of \(d\), such that for all
\(d\ge2\) and all \(\kappa\ge\kappa_\rho\), the \(d\)-dimensional Gaussian covariance
channel over \(\mathcal C_d(\kappa)\) admits a code with average decoding error at most
\(\rho\) and
\[
    k'
    =
    \left\lfloor
        c_\rho d\log_2\kappa
    \right\rfloor
\]
message bits.

A sampler that is \(\delta_{\mathrm{TV}}\)-accurate uniformly over
\(\mathcal C_d(\kappa)\) is a \(\delta_{\mathrm{TV}}\)-accurate simulator for this
Gaussian covariance channel. Applying
Corollary~\ref{cor:channel-synthesis-fixed-error} gives
\[
    Q
    \geq
    k'
    -
    \log_2\frac{2}{1-\delta_{\mathrm{TV}}}.
\]
Using \(k'\ge c_\rho d\log_2\kappa-1\), we get
\[
    Q
    \geq
    c_\rho d\log_2\kappa
    -
    1
    -
    \log_2\frac{2}{1-\delta_{\mathrm{TV}}}.
\]
The last two terms depend only on \(\delta_{\mathrm{TV}}\). Therefore, by increasing
\(\kappa_{\delta_{\mathrm{TV}}}\) if necessary, we may ensure that for all
\(d\ge2\) and \(\kappa\ge\kappa_{\delta_{\mathrm{TV}}}\),
\[
    1+\log_2\frac{2}{1-\delta_{\mathrm{TV}}}
    \leq
    \frac12 c_\rho d\log_2\kappa .
\]
Hence
\[
    Q
    \geq
    \frac12 c_\rho d\log_2\kappa .
\]
Taking \(c_{\delta_{\mathrm{TV}}}=c_\rho/2\) proves the theorem.
\end{proof}

The next result isolates the regime in which the channel-synthesis converse gives a visibly stronger lower bound than a Fano-style reduction.  When \(1-\delta_{\mathrm{TV}}\) is exponentially small in \(d\), Fano loses a multiplicative factor of order \(1-\delta_{\mathrm{TV}}\), whereas the channel-synthesis bound loses only an additive term of order \(\log_2(1/(1-\delta_{\mathrm{TV}}))=O(d)\).  Thus, for fixed \(\kappa>1\), channel synthesis can still yield a nontrivial linear-in-\(d\) lower bound in this large-TV-error regime.
\begin{theorem}[Fixed-\(\kappa\), large-\(d\), large-TV regime]
\label{thm:fixed-kappa-large-d}
Fix \(\kappa>1\).  There are constants \(a_\kappa,c_\kappa>0\) such that if
\(1-\delta_{\mathrm{TV}}=2^{-a d}\) with \(0<a\leq a_\kappa\), then every finite-bit
sampler satisfying
$$
    \sup_{\pi\in\mathcal C_d(\kappa)}
    d_{\mathrm{TV}}\left(\mathcal L(\widehat Y),\pi\right)
    \leq
    \delta_{\mathrm{TV}}
$$
must use
$$
    Q
    \geq
    c_\kappa d
$$
bits for all sufficiently large \(d\).
\end{theorem}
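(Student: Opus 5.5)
Since $\kappa>1$ is fixed, the plan is to combine Theorem~\ref{thm:channel-synthesis-converse} with a one-shot channel code whose rate is linear in $d$ and whose error is \emph{exponentially small} in $d$. Theorem~\ref{thm:capacity-random-low-rank} does not suffice: it needs $\kappa\ge\kappa_\rho$, and its error $\rho$ is constant. We therefore need a direct code for fixed $\kappa$.

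\emph{Code construction.} Set $\lambda=\kappa^{-1}$ and use diagonal covariances $\Sigma_m=\operatorname{diag}(\sigma_{m,1}^2,\dots,\sigma_{m,d}^2)$ with each $\sigma_{m,i}^2\in\{\lambda,1\}$. All of these lie in $\mathcal C_d(\kappa)$. This turns the covariance channel into $d$ independent uses of a scalar channel $b\mapsto Y_i\sim\mathcal N(0,\sigma_b^2)$, $b\in\{0,1\}$. Since $\kappa>1$, the two output laws are distinct, so this binary-input memoryless channel has capacity $C_\kappa>0$. I would then take a random codebook of $2^{k'}$ binary strings of length $d$ with i.i.d.\ Bernoulli$(1/2)$ entries, rate $R=k'/d<C_\kappa$, and decode by maximum likelihood. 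Gallager's random-coding exponent gives average error
\[
\epsilon(k')\le 2^{-dE_\kappa(R)},
\]
where $E_\kappa(R)>0$ for $R<C_\kappa$. Fix $R=C_\kappa/2$ and $E:=E_\kappa(C_\kappa/2)>0$; these depend only on $\kappa$. To stay self-contained, one can instead bound the pairwise ML error by the Bhattacharyya coefficient: for two codewords differing in $w$ coordinates the pairwise error is at most $\beta_\kappa^w$, with $\beta_\kappa=\frac{2\kappa^{-1/4}}{1+\kappa^{-1/2}}<1$. A union bound over $2^{k'}$ random codewords then gives $\epsilon\le 2^{k'}\bigl(\tfrac{1+\beta_\kappa}{2}\bigr)^d$. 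Hence $k'=\lfloor r d\rfloor$ with $r=\tfrac12\log_2\tfrac{2}{1+\beta_\kappa}>0$ yields error at most $2^{-rd}$. I regard this explicit estimate as the main (though routine) step.

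\emph{Applying the converse.} Take $a_\kappa=r/4$, and let $1-\delta_{\mathrm{TV}}=2^{-ad}$ with $a\le a_\kappa$. Then
\[
1-\delta_{\mathrm{TV}}-\epsilon(k')\ge 2^{-ad}-2^{-rd}\ge 2^{-ad-1}
\]
as soon as $2^{-(r-a)d}\le 1/2$, which holds for all $d\ge 1$ because $r-a\ge 3r/4$ and $d$ is large. Any uniformly $\delta_{\mathrm{TV}}$-accurate sampler is a simulator for this sub-channel, since restricting the input set preserves the TV guarantee. Theorem~\ref{thm:channel-synthesis-converse} therefore gives
\[
Q\ge k'-\log_2\frac{1}{1-\delta_{\mathrm{TV}}-\epsilon(k')}\ge rd-1-(ad+1)\ge \tfrac34 rd-2 .
\]
For all sufficiently large $d$ this is at least $\tfrac r2 d$, so we may take $c_\kappa=r/2$.

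The only delicate point is that the code error must be $o(1-\delta_{\mathrm{TV}})$, which is why an exponential error bound with exponent exceeding $a$ is required. The choice $a_\kappa<r$ ensures exactly this.
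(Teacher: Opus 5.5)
Your proof is correct and takes essentially the same route as the paper. You embed the binary diagonal product channel with variances in $\{1/\kappa,1\}$, obtain a code of rate linear in $d$ whose error is exponentially smaller than $1-\delta_{\mathrm{TV}}$, and apply the channel-synthesis converse, which costs only an additive $ad+O(1)$. The one difference is that you replace the paper's citation of Gallager's error exponent with a self-contained Bhattacharyya union bound.

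Two cosmetic slips, neither affecting validity. First, the exact Bhattacharyya coefficient of $\mathcal N(0,1/\kappa)$ and $\mathcal N(0,1)$ is $\sqrt{2\kappa^{-1/2}/(1+\kappa^{-1})}$; this is at most your $\beta_\kappa$, so your bound still holds. Second, the condition $2^{-(r-a)d}\le 1/2$ requires $d\ge 4/(3r)$, not all $d\ge 1$.
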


\begin{proof}
See Appendix~\ref{app:proof-fixed-kappa-large-d}.
\end{proof}

\appendix
\setcounter{figure}{0}
\renewcommand{\thefigure}{A.\arabic{figure}}
\renewcommand{\theHfigure}{appendix.\arabic{figure}}

\section{Appendix roadmap}
The appendices collect technical material omitted from the main text.
Appendix~\ref{subsec:sinc-grid} states the sinc-quadrature approximation lemma used in the algorithms.
Appendix~\ref{app:numerical-sinc-validation} gives a numerical validation of the sinc-quadrature estimates, and Appendix~\ref{app:proof-sinc-quadrature} proves the lemma.
Appendix~\ref{app:proofs-upper} contains the proofs of the upper bounds, including the exact, independent-query, and coordinatewise quantized samplers.
Appendix~\ref{app:proofs-channel} proves the channel-synthesis converse.
Appendix~\ref{app:proofs-capacity} contains the concentration, coding, and geometric estimates used in the capacity calculations.

\section{An approximation lemma}
\subsection{A sinc-quadrature grid}
\label{subsec:sinc-grid}

We state the quadrature construction as a standalone approximation result. The inputs are a scalar accuracy $0<\eta<1/2$ and an interval endpoint $\kappa\geq1$. All logarithms are natural. 
We set 
\begin{equation}
C_0:=
12/(1-e^{-1}).
    \label{eq:C0-choice}
\end{equation}
From the proof we will see that the results continue to hold, up to constants, when $C_0$ is a large enough constant.
Define
\begin{align}
h:=\frac{\pi^2}{\log(C_0/\eta)},\qquad
    M:=\left\lceil \frac{\log(C_0/\eta)}{h}\right\rceil,\qquad
    N:=\left\lceil \frac{\frac12\log\kappa+\log(C_0/\eta)}{h}\right\rceil .
    \label{e_hmn}
\end{align}

Let 
\begin{align}
\mathcal J=
\mathcal J_{\eta,\kappa}:=\{-M,-M+1,\dots,N\}.
\label{e_j}
\end{align}

For $j\in\mathcal J_{\eta,\kappa}$, set
$$
    \alpha_j:=e^{2jh},\qquad
    c_j:=\frac{2h}{\pi}e^{jh}.
$$
Define
$$
    r_{\eta,\kappa}(x)
    :=
    \sum_{j\in\mathcal J_{\eta,\kappa}}
    \frac{c_j}{x+\alpha_j},
    \qquad
    L_h:=\frac{2h}{\pi^2}.
$$
Thus $h$ and $M$ depend only on $\eta$, while $N$ and the index set $\mathcal J_{\eta,\kappa}$ depend on both $\eta$ and $\kappa$. The pole locations $\alpha_j$ and coefficients $c_j$ are determined by $h$ and the index $j$.

The following estimates are standard consequences of sinc quadrature, equivalently of the exponentially convergent trapezoidal rule for analytic functions on the real line; see, for example, \cite{okayama2013error,stenger1993sinc,trefethen2014trapezoidal}.

\begin{lemma}[Sinc-quadrature estimates]
\label{lem:sinc-quadrature}
With the parameters above, uniformly for all $x\in[1,\kappa]$,
\[
    \left|
        \sqrt{x}\,r_{\eta,\kappa}(x)-1
    \right|
    \leq
    \eta,
\]
and
\[
    \left|
        x
        \sum_{j\in\mathcal J_{\eta,\kappa}}
        \frac{c_j^2}{(x+\alpha_j)^2}
        -
        L_h
    \right|
    \leq
    2\eta .
\]
\end{lemma}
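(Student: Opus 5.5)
Both estimates come from trapezoidal-rule approximations of integrals over $\mathbb R$. For the first estimate, start from
\[
    x^{-1/2}=\frac{2}{\pi}\int_0^\infty\frac{dt}{x+t^2}
    =\frac{2}{\pi}\int_{-\infty}^{\infty}\frac{e^{u}}{x+e^{2u}}\,du ,
\]
using the substitution $t=e^u$. Multiplying by $\sqrt x$ and setting $x=e^{2s}$ gives
\[
    1=\frac{1}{\pi}\int_{\mathbb R}\operatorname{sech}(u-s)\,du .
\]
The rule $r_{\eta,\kappa}$ is exactly the truncated trapezoidal rule with step $h$ for this integral:
\[
    \sqrt x\, r_{\eta,\kappa}(x)=\frac{h}{\pi}\sum_{j=-M}^{N}\operatorname{sech}(jh-s).
\]
I will therefore split the error into two parts. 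The first is a discretization error, the untruncated sum $\frac h\pi\sum_{j\in\mathbb Z}$ minus the integral. The second consists of two truncation tails.

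\textbf{Discretization.} The integrand $f(u)=\operatorname{sech}(u-s)$ is analytic in the strip $|\operatorname{Im} u|<\pi/2$ and decays exponentially. Poisson summation gives the error as $\sum_{k\ne0}\hat f(2\pi k/h)$, with $\hat f(\omega)=\pi\operatorname{sech}(\pi\omega/2)$ up to a phase $e^{-i\omega s}$. The error is therefore bounded by
\[
    2\sum_{k\ge1}\operatorname{sech}(\pi^2k/h)\le \frac{4e^{-\pi^2/h}}{1-e^{-\pi^2/h}} .
\]
With $h=\pi^2/\log(C_0/\eta)$ this equals $4(\eta/C_0)/(1-\eta/C_0)$, which is at most a small multiple of $\eta$. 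This uniform bound holds for every $s$, so it needs no restriction on $x$.

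\textbf{Truncation.} For $x\in[1,\kappa]$ we have $s\in[0,\tfrac12\log\kappa]$. The lower tail is
\[
    \frac h\pi\sum_{j<-M}\operatorname{sech}(jh-s)\le \frac{2h}{\pi}\sum_{j<-M}e^{jh-s}\le \frac{2h}{\pi}\,\frac{e^{-Mh}}{1-e^{-h}} .
\]
Since $Mh\ge\log(C_0/\eta)$ and $h/(1-e^{-h})\le 1+h$, this is $O(\eta/C_0)$. The upper tail is handled the same way, now using $Nh-s\ge\log(C_0/\eta)$. Summing the three contributions, the choice $C_0=12/(1-e^{-1})$ makes the total at most $\eta$. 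Here I use $\eta<1/2$, so that $h\le\pi^2/\log 24$ is bounded and the factor $1/(1-e^{-h})$ can be controlled.

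\textbf{Second estimate.} The same scheme applies to
\[
    x\sum_j\frac{c_j^2}{(x+\alpha_j)^2}=\frac{4h^2}{\pi^2}\sum_j\frac{e^{2(jh-s)}}{(1+e^{2(jh-s)})^2}=\frac{h^2}{\pi^2}\sum_j\operatorname{sech}^2(jh-s).
\]
Since $\int_{\mathbb R}\operatorname{sech}^2=2$, the untruncated sum approximates $\frac{h}{\pi^2}\cdot 2=L_h$, the integral with prefactor $h/\pi^2$. The Fourier transform of $\operatorname{sech}^2$ is $\pi\omega/\sinh(\pi\omega/2)$, again with decay $e^{-\pi^2k/h}$ at $\omega=2\pi k/h$. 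The extra factor $\pi\omega\sim k/h$ is harmless after multiplying by the prefactor $h/\pi^2$, giving an error of order $e^{-\pi^2/h}=\eta/C_0$. The tails decay like $e^{-2|jh-s|}$ and are even smaller. The total is therefore at most $2\eta$.

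\textbf{Main obstacle.} The main obstacle is purely bookkeeping: making the constants close with the specific $C_0$. The delicate point is the factor $1/(1-e^{-h})$ when $h$ is small, which costs a $1/h$ that I will absorb into the $h$ prefactor. The Fourier-transform formulas and the Poisson summation step are standard, and I would cite them rather than rederive them.
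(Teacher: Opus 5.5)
Your proposal is correct and follows essentially the same route as the paper. Both arguments rewrite the sums as $\frac{h}{\pi}\sum_j\operatorname{sech}(jh-s)$ and $\frac{h^2}{\pi^2}\sum_j\operatorname{sech}^2(jh-s)$ with $s=\frac12\log x$. Both bound the aliasing error by Poisson summation with the Fourier transforms of $\operatorname{sech}$ and $\operatorname{sech}^2$, obtaining $4\rho/(1-\rho)$ and $8\rho/(1-\rho)^2$ with $\rho=e^{-\pi^2/h}=\eta/C_0$. Both control the truncation tails via $\operatorname{sech}(u)\le 2e^{-|u|}$ together with $Mh\ge\log(C_0/\eta)$ and $Nh-s\ge\log(C_0/\eta)$. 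Your bookkeeping, e.g.\ $h/(1-e^{-h})\le 1+h$ with $h$ bounded because $\eta<1/2$, closes with room to spare: about $\eta/2$ for the first estimate and well under $2\eta$ for the second.
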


\begin{proof}
See Appendix~\ref{app:proof-sinc-quadrature}.
\end{proof}

\subsection{Numerical validation of the sinc-quadrature estimates}
\label{app:numerical-sinc-validation}

We include a small numerical check of Lemma~\ref{lem:sinc-quadrature}. For each pair
\((\eta,\kappa)\), define the uniform errors
\[
    E_1(\eta,\kappa)
    =
    \sup_{x\in[1,\kappa]}
    \left|
        \sqrt{x}\,r_{\eta,\kappa}(x)-1
    \right|,
\]
and
\[
    E_2(\eta,\kappa)
    =
    \sup_{x\in[1,\kappa]}
    \left|
        x
        \sum_{j\in\mathcal J_{\eta,\kappa}}
        \frac{c_j^2}{(x+\alpha_j)^2}
        -
        L_h
    \right|.
\]
The suprema are estimated numerically on a dense logarithmic grid over \([1,\kappa]\).

Figure~\ref{fig:sinc-quadrature-validation} plots \(E_1(\eta,\kappa)\) and
\(E_2(\eta,\kappa)\) on log-log axes for
\[
    \kappa\in\{1,100,10000\},
    \qquad
    \log_{10}\eta\in\{-5,-4.5,\dots,-1\}.
\]
The identity line \(y=\eta\) is included as a reference.  The errors scale approximately linearly in \(\eta\), which is consistent with the \(O(\eta)\) estimates in Lemma~\ref{lem:sinc-quadrature}.  The different markers correspond to different values of \(\kappa\).  The bounded separation between the curves illustrates that the quadrature construction controls the error uniformly over the interval \([1,\kappa]\); increasing \(\kappa\) mainly increases the right truncation level \(N\) in \eqref{e_hmn}.

\begin{figure}[H]
    \centering
    \IfFileExists{sinc_quadrature_validation.pdf}{%
    \includegraphics[width=0.92\linewidth]{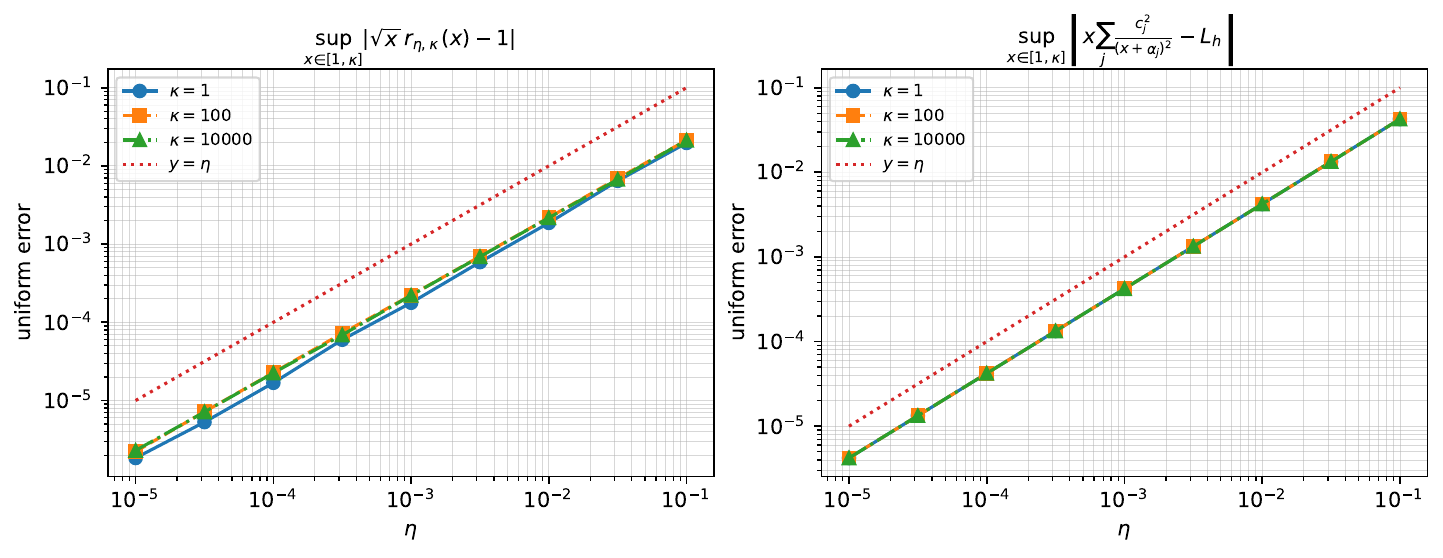}%
}{%
    \fbox{\parbox{0.88\linewidth}{\centering Missing figure file: \texttt{sinc\_quadrature\_validation.pdf}}}%
}
    \caption{Numerical validation of Lemma~\ref{lem:sinc-quadrature}.  Left:
    \(E_1(\eta,\kappa)=\sup_{x\in[1,\kappa]}|\sqrt{x}\,r_{\eta,\kappa}(x)-1|\).
    Right:
    \(E_2(\eta,\kappa)=
    \sup_{x\in[1,\kappa]}\left|
    x\sum_{j\in\mathcal J_{\eta,\kappa}}c_j^2/(x+\alpha_j)^2-L_h
    \right|\).
    The dotted line is \(y=\eta\).}
    \label{fig:sinc-quadrature-validation}
\end{figure}

\subsection{Proof of Lemma~\ref{lem:sinc-quadrature}}
\label{app:proof-sinc-quadrature}

Before proving the lemma, we record the Fourier identities used below.

\begin{lemma}[Fourier transforms of \(\operatorname{sech}\) and \(\operatorname{sech}^2\)]
\label{lem:fourier-sech-identities}
Use the convention
\[
    \widehat f(\omega)
    =
    \int_{\mathbb R} f(u)e^{-i\omega u}\,du .
\]
Then
\[
    \widehat{\operatorname{sech}}(\omega)
    =
    \frac{\pi}{\cosh(\pi\omega/2)},
\]
and
\[
    \widehat{\operatorname{sech}^2}(\omega)
    =
    \frac{\pi\omega}{\sinh(\pi\omega/2)},
    \qquad
    \widehat{\operatorname{sech}^2}(0)=2 .
\]
\end{lemma}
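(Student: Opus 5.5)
The plan is to compute both transforms as contour integrals, closing the contour around a rectangle of height \(i\pi\). Since \(\operatorname{sech}\) and \(\operatorname{sech}^2\) are even, real, and integrable, both transforms are real and even in \(\omega\). It therefore suffices to compute them for \(\omega\neq0\), and the value at \(0\) follows by continuity (or directly).

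\textbf{First identity.} Set \(I(\omega)=\int_{\mathbb R}\operatorname{sech}(u)e^{-i\omega u}\,du\). Integrate \(f(z)=\operatorname{sech}(z)e^{-i\omega z}\) around the rectangle with vertices \(\pm R\) and \(\pm R+i\pi\).

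\begin{enumerate}
\item Use \(\cosh(u+i\pi)=-\cosh u\). The top edge then contributes \(+e^{\omega\pi}I(\omega)\): the minus sign from the reversed orientation cancels the minus sign from \(\operatorname{sech}\).
\item The vertical edges vanish as \(R\to\infty\), because \(|\cosh(\pm R+iy)|\ge\sinh R\) while \(|e^{-i\omega z}|\) stays bounded on those edges.
\item The only pole inside is the simple pole at \(z=i\pi/2\). There \(\sinh(i\pi/2)=i\), so the residue is \(e^{\omega\pi/2}/i\).
\end{enumerate}

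The residue theorem then gives
\[
(1+e^{\pi\omega})I=2\pi e^{\pi\omega/2},\qquad\text{so}\qquad I=\frac{\pi}{\cosh(\pi\omega/2)}.
\]

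\textbf{Second identity.} Use the same rectangle with \(g(z)=\operatorname{sech}^2(z)e^{-i\omega z}\).

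\begin{enumerate}
\item Now \(\operatorname{sech}^2(u+i\pi)=\operatorname{sech}^2 u\), so the top edge contributes \(-e^{\pi\omega}J(\omega)\). The contour integral is therefore \((1-e^{\pi\omega})J\).
\item The pole at \(i\pi/2\) is now double. Write \(z=i\pi/2+w\). Then \(\cosh z=i\sinh w\), so \(\operatorname{sech}^2 z=-1/\sinh^2 w=-w^{-2}+O(1)\).
\item The residue is the derivative of \(-e^{-i\omega z}\) at \(z=i\pi/2\), which equals \(i\omega e^{\pi\omega/2}\).
\end{enumerate}

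This gives \((1-e^{\pi\omega})J=2\pi i\cdot i\omega e^{\pi\omega/2}=-2\pi\omega e^{\pi\omega/2}\). Dividing by \(-2e^{\pi\omega/2}\sinh(\pi\omega/2)\) yields
\[
J=\frac{\pi\omega}{\sinh(\pi\omega/2)}.
\]
The value at \(0\) is the limit \(\pi\omega/(\pi\omega/2)\to2\), consistent with \(\int\operatorname{sech}^2=[\tanh]_{-\infty}^{\infty}=2\).

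\textbf{Alternative for the second identity.} One could instead use \(\operatorname{sech}^2=(\tanh)'\) and integrate by parts, but \(\tanh\) is not integrable. The direct contour computation is cleaner.

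\textbf{Main obstacle.} The only delicate point is bookkeeping: getting the signs right on the top edge and computing the double-pole residue correctly. I would double-check both against the \(\omega=0\) values, namely \(\pi\) and \(2\).
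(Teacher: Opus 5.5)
Your proposal is correct, but it takes a different route from the paper. You integrate over the rectangle of height $i\pi$ and use $\cosh(z+i\pi)=-\cosh z$ to relate the top edge back to the original integral. The following steps all check out:
\begin{itemize}
\item the sign bookkeeping on the top edge, giving $(1+e^{\pi\omega})I$ and $(1-e^{\pi\omega})J$;
\item the simple-pole residue $e^{\pi\omega/2}/i$;
\item the double-pole residue $i\omega e^{\pi\omega/2}$, which relies on $\operatorname{sech}^2 z=-w^{-2}+O(1)$ having no $w^{-1}$ term (true because $1/\sinh^2 w$ is even).
\end{itemize}

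The paper instead substitutes $t=e^{2u}$. This turns the two transforms into Mellin-type integrals $\int_0^\infty t^{a-1}(1+t)^{-1}\,dt$ and $\int_0^\infty t^{a-1}(1+t)^{-2}\,dt$ with complex exponent. It evaluates these as Beta integrals, $\pi/\sin(\pi a)$ and $\Gamma(a)\Gamma(2-a)$, and finishes with $\Gamma(1+iy)\Gamma(1-iy)=\pi y/\sinh(\pi y)$.

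The paper's route buys three things:
\begin{itemize}
\item it is shorter once those special-function identities are granted;
\item it does not degenerate at $\omega=0$, since the Beta form simply gives $2B(1,1)=2$ there;
\item it extends immediately to $\operatorname{sech}^k$ via $2^{k-1}B(a,k-a)$ with $a=k/2-i\omega/2$.
\end{itemize}

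Your route is self-contained: it needs only the residue theorem and in effect reproves the complex-parameter Beta and reflection identities that the paper cites. The cost is that for $\operatorname{sech}^2$ the factor $1-e^{\pi\omega}$ vanishes at $\omega=0$, so the contour identity becomes vacuous there and that case needs separate treatment. You handle it correctly, by continuity and the direct computation $\int_{\mathbb R}\operatorname{sech}^2 u\,du=2$.
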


\begin{proof}
Recall that
\[
    \cosh u=\frac{e^u+e^{-u}}{2},
    \qquad
    \operatorname{sech}u=\frac1{\cosh u}.
\]
For the first identity, use \(t=e^{2u}\). Then \(du=dt/(2t)\),
\(e^{-i\omega u}=t^{-i\omega/2}\), and
\[
    \operatorname{sech}u
    =
    \frac{2\sqrt t}{1+t}.
\]
Therefore
\[
\begin{aligned}
    \widehat{\operatorname{sech}}(\omega)
    &=
    \int_0^\infty
    t^{-i\omega/2}
    \frac{2\sqrt t}{1+t}
    \frac{dt}{2t}                                      \\
    &=
    \int_0^\infty
    \frac{t^{-1/2-i\omega/2}}{1+t}\,dt .
\end{aligned}
\]
Writing \(a=1/2-i\omega/2\), Euler's beta identity gives
\[
    \int_0^\infty \frac{t^{a-1}}{1+t}\,dt
    =
    \frac{\pi}{\sin(\pi a)}.
\]
Since
\[
    \sin\left(\frac{\pi}{2}-\frac{i\pi\omega}{2}\right)
    =
    \cosh(\pi\omega/2),
\]
we get
\[
    \widehat{\operatorname{sech}}(\omega)
    =
    \frac{\pi}{\cosh(\pi\omega/2)}.
\]

For the second identity, again use \(t=e^{2u}\). Then
\[
    \operatorname{sech}^2u
    =
    \frac{4e^{2u}}{(1+e^{2u})^2}
    =
    \frac{4t}{(1+t)^2}.
\]
Thus
\[
\begin{aligned}
    \widehat{\operatorname{sech}^2}(\omega)
    &=
    \int_0^\infty
    t^{-i\omega/2}
    \frac{4t}{(1+t)^2}
    \frac{dt}{2t}                                      \\
    &=
    2\int_0^\infty
    \frac{t^{-i\omega/2}}{(1+t)^2}\,dt .
\end{aligned}
\]
Let \(a_\omega=1-i\omega/2\). Since \(t^{-i\omega/2}=t^{a_\omega-1}\), the beta integral gives
\[
    \int_0^\infty
    \frac{t^{a_\omega-1}}{(1+t)^2}\,dt
    =
    B(a_\omega,2-a_\omega)
    =
    \Gamma(a_\omega)\Gamma(2-a_\omega).
\]
Therefore
\[
    \widehat{\operatorname{sech}^2}(\omega)
    =
    2\Gamma\left(1-\frac{i\omega}{2}\right)
     \Gamma\left(1+\frac{i\omega}{2}\right).
\]
Using
\[
    \Gamma(1+iy)\Gamma(1-iy)
    =
    \frac{\pi y}{\sinh(\pi y)},
\]
with \(y=\omega/2\), we obtain
\[
    \widehat{\operatorname{sech}^2}(\omega)
    =
    \frac{\pi\omega}{\sinh(\pi\omega/2)}.
\]
At \(\omega=0\), the right-hand side is interpreted by continuity and equals \(2\),
which also equals \(\int_{\mathbb R}\operatorname{sech}^2u\,du\).
\end{proof}

\begin{lemma}[Poisson summation identities]
\label{lem:poisson-sech-identities}
For \(h>0\) and \(a\in\mathbb R\),
\[
    \frac{h}{\pi}
    \sum_{j\in\mathbb Z}
    \operatorname{sech}(jh-a)
    =
    1+
    \sum_{m\neq0}
    \frac{e^{2\pi i m a/h}}
         {\cosh(\pi^2m/h)} .
\]
Moreover,
\[
    \frac{h^2}{\pi^2}
    \sum_{j\in\mathbb Z}
    \operatorname{sech}^2(jh-a)
    =
    \frac{2h}{\pi^2}
    +
    \sum_{m\neq0}
    e^{2\pi i m a/h}
    \frac{2m}{\sinh(\pi^2m/h)} .
\]
\end{lemma}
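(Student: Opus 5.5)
We apply the Poisson summation formula to the function \(f_a(u)=\operatorname{sech}(u-a)\) sampled on the lattice \(h\mathbb Z\), and similarly to \(g_a(u)=\operatorname{sech}^2(u-a)\). The Fourier transforms are then read off from Lemma~\ref{lem:fourier-sech-identities} together with the translation rule.

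In the convention \(\widehat f(\omega)=\int f(u)e^{-i\omega u}\,du\), Poisson summation reads
\[
    h\sum_{j\in\mathbb Z} f(jh)=\sum_{m\in\mathbb Z}\widehat f\!\left(\frac{2\pi m}{h}\right).
\]
It is valid here because \(\operatorname{sech}\) and \(\operatorname{sech}^2\) are smooth and decay exponentially, and their transforms \(\pi/\cosh(\pi\omega/2)\) and \(\pi\omega/\sinh(\pi\omega/2)\) also decay exponentially. Both sides therefore converge absolutely, and the standard sufficient condition (e.g.\ \(f\) and \(\widehat f\) both \(O((1+|x|)^{-1-\epsilon})\)) applies.

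Translation gives \(\widehat{f_a}(\omega)=e^{-i\omega a}\,\widehat{\operatorname{sech}}(\omega)\). At \(\omega=2\pi m/h\) this equals
\[
    e^{-2\pi i m a/h}\,\frac{\pi}{\cosh(\pi^2 m/h)}.
\]
Dividing by \(\pi\) gives
\[
    \frac{h}{\pi}\sum_{j}\operatorname{sech}(jh-a)=\sum_m e^{-2\pi i m a/h}\operatorname{sech}(\pi^2m/h).
\]
The \(m=0\) term is \(1\). Since \(\cosh\) is even, relabeling \(m\mapsto -m\) turns the exponent sign into \(+\), which yields the first identity.

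For the second identity, translation again gives \(\widehat{g_a}(2\pi m/h)=e^{-2\pi i m a/h}\cdot\frac{2\pi^2 m/h}{\sinh(\pi^2 m/h)}\), with value \(2\) at \(m=0\). Poisson summation gives
\[
    h\sum_j\operatorname{sech}^2(jh-a)=2+\sum_{m\ne0}e^{-2\pi i m a/h}\frac{2\pi^2 m/h}{\sinh(\pi^2m/h)}.
\]
Multiplying by \(h/\pi^2\) turns the constant into \(2h/\pi^2\) and each summand coefficient into \(2m/\sinh(\pi^2 m/h)\). The function \(m/\sinh(\pi^2m/h)\) is even in \(m\), so relabeling \(m\mapsto-m\) again fixes the sign of the exponent, and the second identity follows.

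The only step needing care is justifying Poisson summation, i.e.\ pointwise equality rather than equality in a distributional sense. This follows from absolute convergence of both series together with continuity of the periodized function \(\sum_j f(jh+\cdot)\). Both hold by the exponential decay noted above.
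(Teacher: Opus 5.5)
Your proposal is correct and follows the paper's proof essentially line by line: Poisson summation applied to the translates \(\operatorname{sech}(\cdot-a)\) and \(\operatorname{sech}^2(\cdot-a)\), then the translation rule together with Lemma~\ref{lem:fourier-sech-identities} evaluated at \(\omega=2\pi m/h\), then the reindexing \(m\mapsto -m\), which is valid because \(1/\cosh(\pi^2 m/h)\) and \(m/\sinh(\pi^2 m/h)\) are even in \(m\). The paper leaves implicit why Poisson summation holds pointwise, and your justification is correct: both functions and their transforms decay exponentially, and the periodization converges uniformly.
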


\begin{proof}
Let \(f(u)=\operatorname{sech}(u)\) and \(f_a(u)=f(u-a)\). Poisson summation gives
\[
    \sum_{j\in\mathbb Z} f_a(jh)
    =
    \frac1h
    \sum_{m\in\mathbb Z}
    \widehat f_a\left(\frac{2\pi m}{h}\right).
\]
Since \(f_a(u)=f(u-a)\), we have
\[
    \widehat f_a(\omega)=e^{-i\omega a}\widehat f(\omega).
\]
Using Lemma~\ref{lem:fourier-sech-identities},
\[
    \widehat f\left(\frac{2\pi m}{h}\right)
    =
    \frac{\pi}{\cosh(\pi^2m/h)}.
\]
Multiplying by \(h/\pi\) yields
\[
    \frac{h}{\pi}
    \sum_{j\in\mathbb Z}
    \operatorname{sech}(jh-a)
    =
    \sum_{m\in\mathbb Z}
    \frac{e^{-2\pi i m a/h}}
         {\cosh(\pi^2m/h)}.
\]
The \(m=0\) term is \(1\). Reindexing \(m\mapsto -m\) gives the displayed formula.

For the second identity, let \(g(u)=\operatorname{sech}^2(u)\) and \(g_a(u)=g(u-a)\).
Poisson summation gives
\[
    \frac{h^2}{\pi^2}
    \sum_{j\in\mathbb Z}g_a(jh)
    =
    \frac{h}{\pi^2}
    \sum_{m\in\mathbb Z}
    e^{-2\pi i m a/h}
    \widehat g\left(\frac{2\pi m}{h}\right).
\]
The \(m=0\) term equals
\[
    \frac{h}{\pi^2}\widehat g(0)
    =
    \frac{2h}{\pi^2}.
\]
For \(m\neq0\), Lemma~\ref{lem:fourier-sech-identities} gives
\[
\begin{aligned}
    \frac{h}{\pi^2}
    \widehat g\left(\frac{2\pi m}{h}\right)
    &=
    \frac{h}{\pi^2}
    \frac{\pi(2\pi m/h)}
         {\sinh(\pi^2m/h)}                    \\
    &=
    \frac{2m}{\sinh(\pi^2m/h)}.
\end{aligned}
\]
Reindexing \(m\mapsto -m\) gives the displayed formula.
\end{proof}

\begin{proof}[Proof of Lemma~\ref{lem:sinc-quadrature}]
For $x\in[1,\kappa]$, set
\[
    a=\frac12\log x.
\]
Then $a\in[0,\frac12\log\kappa]$. A direct calculation gives
\[
    \sqrt{x}\frac{c_j}{x+\alpha_j}
    =
    \frac{h}{\pi}\operatorname{sech}(jh-a),
\]
and
\[
    x\frac{c_j^2}{(x+\alpha_j)^2}
    =
    \frac{h^2}{\pi^2}\operatorname{sech}^2(jh-a).
\]
Therefore
\[
    \sqrt{x}\,r_{\eta,\kappa}(x)
    =
    \frac{h}{\pi}
    \sum_{j=-M}^N
    \operatorname{sech}(jh-a),
\]
and
\[
    x
    \sum_{j=-M}^N
    \frac{c_j^2}{(x+\alpha_j)^2}
    =
    \frac{h^2}{\pi^2}
    \sum_{j=-M}^N
    \operatorname{sech}^2(jh-a).
\]

We first record the corresponding infinite-grid identities. Let
\[
    f(u)=\operatorname{sech}(u),
    \qquad
    f_a(u)=f(u-a).
\]
Poisson summation gives
\[
    \sum_{j\in\mathbb Z} f_a(jh)
    =
    \frac1h
    \sum_{m\in\mathbb Z}
    \widehat f_a\left(\frac{2\pi m}{h}\right).
\]
Since \(f_a(u)=f(u-a)\), we have
\[
    \widehat f_a(\omega)=e^{-i\omega a}\widehat f(\omega).
\]
Using Lemma~\ref{lem:fourier-sech-identities},
\[
    \widehat f\left(\frac{2\pi m}{h}\right)
    =
    \frac{\pi}{\cosh(\pi^2m/h)}.
\]
Multiplying by \(h/\pi\) yields
\[
    \frac{h}{\pi}
    \sum_{j\in\mathbb Z}
    \operatorname{sech}(jh-a)
    =
    \sum_{m\in\mathbb Z}
    \frac{e^{-2\pi i m a/h}}
         {\cosh(\pi^2m/h)}.
\]
The \(m=0\) term is \(1\). Reindexing \(m\mapsto -m\) gives
\[
    \frac{h}{\pi}
    \sum_{j\in\mathbb Z}
    \operatorname{sech}(jh-a)
    =
    1+
    2\sum_{m=1}^{\infty}
    \frac{\cos(2\pi m a/h)}
         {\cosh(\pi^2m/h)}.
\]
Hence, using \(\cosh u\geq e^u/2\),
\[
\begin{aligned}
    \left|
    \frac{h}{\pi}
    \sum_{j\in\mathbb Z}
    \operatorname{sech}(jh-a)-1
    \right|
    &\leq
    2\sum_{m=1}^{\infty}
    \frac{1}{\cosh(\pi^2m/h)}
    \\
    &\leq
    4\sum_{m=1}^{\infty}
    e^{-\pi^2m/h}
    \\
    &=
    \frac{4e^{-\pi^2/h}}{1-e^{-\pi^2/h}}.
\end{aligned}
\]
Let
\[
    \rho:=e^{-\pi^2/h}.
\]
Since \(\rho\leq \eta/C_0\leq 1/C_0\), we have
\[
    \frac{4\rho}{1-\rho}
    \leq
    \frac{4\eta/C_0}{1-1/C_0}
    =
    \frac{4\eta}{C_0-1}.
\]
Because
\[
    C_0=\frac{12}{1-e^{-1}}>17,
\]
we have \(4/(C_0-1)\leq 1/4\). Therefore
\[
    \left|
    \frac{h}{\pi}
    \sum_{j\in\mathbb Z}
    \operatorname{sech}(jh-a)-1
    \right|
    \leq
    \frac{\eta}{4}.
\]

For the second identity, let
\[
    g(u)=\operatorname{sech}^2(u),
    \qquad
    g_a(u)=g(u-a).
\]
Poisson summation gives
\[
    \frac{h^2}{\pi^2}
    \sum_{j\in\mathbb Z}g_a(jh)
    =
    \frac{h}{\pi^2}
    \sum_{m\in\mathbb Z}
    e^{-2\pi i m a/h}
    \widehat g\left(\frac{2\pi m}{h}\right).
\]
The \(m=0\) term equals
\[
    \frac{h}{\pi^2}\widehat g(0)
    =
    \frac{2h}{\pi^2}.
\]
We denote
\[
    L_h:=\frac{2h}{\pi^2}.
\]
For \(m\neq0\), Lemma~\ref{lem:fourier-sech-identities} gives
\[
\begin{aligned}
    \frac{h}{\pi^2}
    \widehat g\left(\frac{2\pi m}{h}\right)
    &=
    \frac{h}{\pi^2}
    \frac{\pi(2\pi m/h)}
         {\sinh(\pi^2m/h)}
    \\
    &=
    \frac{2m}{\sinh(\pi^2m/h)}.
\end{aligned}
\]
Reindexing \(m\mapsto -m\) gives
\[
    \frac{h^2}{\pi^2}
    \sum_{j\in\mathbb Z}
    \operatorname{sech}^2(jh-a)
    =
    L_h
    +
    4\sum_{m=1}^{\infty}
    \frac{m\cos(2\pi m a/h)}
         {\sinh(\pi^2m/h)}.
\]
Thus, using \(\sinh u\geq e^u/2\) for \(u>0\),
\[
\begin{aligned}
    \left|
    \frac{h^2}{\pi^2}
    \sum_{j\in\mathbb Z}
    \operatorname{sech}^2(jh-a)
    -
    L_h
    \right|
    &\leq
    4\sum_{m=1}^{\infty}
    \frac{m}{\sinh(\pi^2m/h)}
    \\
    &\leq
    8\sum_{m=1}^{\infty}
    m e^{-\pi^2m/h}
    \\
    &=
    \frac{8e^{-\pi^2/h}}{(1-e^{-\pi^2/h})^2}
    \\
    &=
    \frac{8\rho}{(1-\rho)^2}.
\end{aligned}
\]
Since \(\rho\leq \eta/C_0\leq 1/C_0\),
\[
    \frac{8\rho}{(1-\rho)^2}
    \leq
    \frac{8\eta/C_0}{(1-1/C_0)^2}
    =
    \frac{8C_0}{(C_0-1)^2}\eta.
\]
For
\[
    C_0=\frac{12}{1-e^{-1}},
\]
one has
\[
    \frac{8C_0}{(C_0-1)^2}
    \leq
    \frac12.
\]
Hence
\[
    \left|
    \frac{h^2}{\pi^2}
    \sum_{j\in\mathbb Z}
    \operatorname{sech}^2(jh-a)
    -
    L_h
    \right|
    \leq
    \frac{\eta}{2}.
\]

It remains to control the truncation error from replacing the infinite sums by
\(\sum_{j=-M}^N\). We use
\[
    \operatorname{sech}(u)\leq 2e^{-|u|},
    \qquad
    \operatorname{sech}^2(u)\leq \operatorname{sech}(u).
\]
Since \(a\in[0,\frac12\log\kappa]\), the right tail satisfies
\[
\begin{aligned}
    \frac{h}{\pi}
    \sum_{j=N+1}^{\infty}
    \operatorname{sech}(jh-a)
    &\leq
    \frac{2h}{\pi}
    \sum_{j=N+1}^{\infty}
    e^{-(jh-a)}
    \\
    &=
    \frac{2h}{\pi}
    \frac{e^a e^{-(N+1)h}}{1-e^{-h}}
    \\
    &\leq
    \frac{2h}{\pi}
    \frac{\sqrt{\kappa}\,e^{-(N+1)h}}{1-e^{-h}}.
\end{aligned}
\]
Similarly, the left tail satisfies
\[
\begin{aligned}
    \frac{h}{\pi}
    \sum_{j=-\infty}^{-M-1}
    \operatorname{sech}(jh-a)
    &\leq
    \frac{2h}{\pi}
    \sum_{j=M+1}^{\infty}
    e^{-(jh+a)}
    \\
    &\leq
    \frac{2h}{\pi}
    \frac{e^{-(M+1)h}}{1-e^{-h}}.
\end{aligned}
\]
Combining the two estimates and using the truncation choices gives
\[
\begin{aligned}
    \frac{h}{\pi}
    \sum_{j\notin[-M,N]}
    \operatorname{sech}(jh-a)
    &\leq
    \frac{2h}{\pi(1-e^{-h})}
    \left(
        \sqrt{\kappa}\,e^{-(N+1)h}
        +
        e^{-(M+1)h}
    \right)
    \\
    &\leq
    \frac{4h}{\pi(1-e^{-h})}\frac{\eta}{C_0}.
\end{aligned}
\]
We next bound the prefactor uniformly. Since \(\eta\in(0,1)\) and
\[
    h=\frac{\pi^2}{\log(C_0/\eta)}
    \leq
    \frac{\pi^2}{\log C_0}
    <4,
\]
we have
\[
    \frac{h}{1-e^{-h}}
    \leq
    \frac{4}{1-e^{-1}},
\]
where the inequality is immediate for \(h\geq1\), while for \(0<h\leq1\) it follows from
\(h/(1-e^{-h})\leq 1/(1-e^{-1})\leq 4/(1-e^{-1})\). Therefore
\[
    \frac{4h}{\pi(1-e^{-h})}\frac{\eta}{C_0}
    \leq
    \frac{16}{\pi(1-e^{-1})}\frac{\eta}{C_0}
    =
    \frac{4}{3\pi}\eta
    \leq
    \frac{\eta}{2}.
\]
Thus
\[
    \frac{h}{\pi}
    \sum_{j\notin[-M,N]}
    \operatorname{sech}(jh-a)
    \leq
    \frac{\eta}{2}.
\]
Combining this truncation estimate with the infinite-grid aliasing estimate yields
\[
\begin{aligned}
    \left|
        \sqrt{x}\,r_{\eta,\kappa}(x)-1
    \right|
    &\leq
    \left|
    \frac{h}{\pi}
    \sum_{j\in\mathbb Z}
    \operatorname{sech}(jh-a)-1
    \right|
    +
    \frac{h}{\pi}
    \sum_{j\notin[-M,N]}
    \operatorname{sech}(jh-a)
    \\
    &\leq
    \frac{\eta}{4}+\frac{\eta}{2}
    \leq
    \eta.
\end{aligned}
\]

For the squared estimate, the same truncation argument gives an even smaller tail.
Indeed, since \(\operatorname{sech}^2(u)\leq \operatorname{sech}(u)\),
\[
\begin{aligned}
    \frac{h^2}{\pi^2}
    \sum_{j\notin[-M,N]}
    \operatorname{sech}^2(jh-a)
    &\leq
    \frac{h}{\pi}
    \left[
        \frac{h}{\pi}
        \sum_{j\notin[-M,N]}
        \operatorname{sech}(jh-a)
    \right].
\end{aligned}
\]
Using again \(h\leq4\), we get
\[
    \frac{h^2}{\pi^2}
    \sum_{j\notin[-M,N]}
    \operatorname{sech}^2(jh-a)
    \leq
    \frac{4}{\pi}\cdot\frac{\eta}{2}
    \leq
    \eta.
\]
Therefore
\[
\begin{aligned}
    \left|
        \frac{h^2}{\pi^2}
        \sum_{j=-M}^N
        \operatorname{sech}^2(jh-a)
        -
        L_h
    \right|
    &\leq
    \left|
        \frac{h^2}{\pi^2}
        \sum_{j\in\mathbb Z}
        \operatorname{sech}^2(jh-a)
        -
        L_h
    \right|
    \\
    &\qquad
    +
    \frac{h^2}{\pi^2}
    \sum_{j\notin[-M,N]}
    \operatorname{sech}^2(jh-a)
    \\
    &\leq
    \frac{\eta}{2}+\eta
    \leq
    2\eta.
\end{aligned}
\]
Recalling that
\[
    x
    \sum_{j=-M}^N
    \frac{c_j^2}{(x+\alpha_j)^2}
    =
    \frac{h^2}{\pi^2}
    \sum_{j=-M}^N
    \operatorname{sech}^2(jh-a),
\]
we conclude that
\[
    \left|
        x
        \sum_{j\in\mathcal J_{\eta,\kappa}}
        \frac{c_j^2}{(x+\alpha_j)^2}
        -
        L_h
    \right|
    \leq
    2\eta.
\]
This completes the proof.
\end{proof}

\section{Proofs for the upper bounds}
\label{app:proofs-upper}

\subsection{Reduction to centered Gaussians}
Before proceeding to the proof of upper bound, 
we first make 
the following observation.
It shows that in the case of uncentered Gaussians,
the mean can be estimated to arbitrary precision using two smoothed-score queries.
Thus the uncentered case can be reduced to the centered case in
Section~\ref{subsec:gaussian-sampling-setup}.
\begin{proposition}[Reduction to the centered case]
\label{prop:noncentered-reduction}
Let $\pi=\mathcal N(\mu,\Sigma)$ with $\Sigma\preceq I$, and define
\[
    s_\tau(y)
    :=
    \nabla \log\bigl(\pi*\mathcal N(0,\tau I)\bigr)(y)
    =
    -(\Sigma+\tau I)^{-1}(y-\mu).
\]
For every $\delta_\mu>0$, there exists an estimator
$\widehat\mu$, computable using two exact smoothed-score queries, such
that
\[
    \|\widehat\mu-\mu\|_\Lambda
    \le
    \delta_\mu .
\]
\end{proposition}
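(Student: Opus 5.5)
Since $s_\tau$ is affine in $y$, two queries at the same noise level determine both the linear part and the mean. Query at $y=0$ with a very large noise level $\tau$. This gives
\[
    g:=s_\tau(0)=(\Sigma+\tau I)^{-1}\mu,
\]
and we set $\widehat\mu:=\tau g$. Then
\[
    \mu-\widehat\mu=\bigl(I-\tau(\Sigma+\tau I)^{-1}\bigr)\mu=\Sigma(\Sigma+\tau I)^{-1}\mu .
\]
Because $\Sigma$ and $(\Sigma+\tau I)^{-1}$ commute, we get
\[
    \|\mu-\widehat\mu\|_\Lambda^2=\mu^\top\Sigma(\Sigma+\tau I)^{-2}\mu\le \tau^{-2}\|\mu\|^2,
\]
using $\Sigma\preceq I$. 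This error tends to zero as $\tau\to\infty$, but $\|\mu\|$ is unknown, so $\tau$ cannot be fixed in advance.

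The second query handles this. First compute the crude estimate $\widehat\mu_0=\tau_0 s_{\tau_0}(0)$ with $\tau_0=1$. Since $\Sigma\preceq I$, we have $(\Sigma+I)^{-1}\succeq \tfrac12 I$, and therefore
\[
    \|\widehat\mu_0\|=\|(\Sigma+I)^{-1}\mu\|\ge \tfrac12\|\mu\| .
\]
This gives the computable bound $\|\mu\|\le 2\|\widehat\mu_0\|$. Now choose
\[
    \tau_1:=\max\{1,\,2\|\widehat\mu_0\|/\delta_\mu\}
\]
and output $\widehat\mu=\tau_1 s_{\tau_1}(0)$. The bound above then gives $\|\widehat\mu-\mu\|_\Lambda\le \|\mu\|/\tau_1\le\delta_\mu$. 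If $\widehat\mu_0=0$, then $\mu=0$ and the first estimate is already exact. The choice of $\tau_1$ depends on the first response, which the exact oracle model allows.

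The main obstacle is exactly this unknown scale of $\mu$: a single query cannot achieve a prescribed error $\delta_\mu$ uniformly over $\mu$, so adaptivity must supply the bound on $\|\mu\|$. A secondary check is the norm bound. $\|\cdot\|_\Lambda$ has $\Lambda=\Sigma^{-1}$, which may be unbounded above, but the factor $\Sigma$ in the error term cancels it. Explicitly, $\Lambda^{1/2}\Sigma(\Sigma+\tau I)^{-1}=\Sigma^{1/2}(\Sigma+\tau I)^{-1}$, whose operator norm is at most $\tau^{-1}$ because $\Sigma\preceq I$.
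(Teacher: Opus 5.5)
Your proposal is correct and matches the paper's proof. Query $s_1(0)=(\Sigma+I)^{-1}\mu$ to get the computable bound $\|\mu\|\le 2\|s_1(0)\|$, set $\tau\propto\|s_1(0)\|/\delta_\mu$, and output $\widehat\mu=\tau s_\tau(0)$, with $\|\widehat\mu-\mu\|_\Lambda\le\|\mu\|/\tau$ following from $\Sigma\preceq I$. Your $\max\{1,\cdot\}$ in $\tau_1$ is a harmless variant that absorbs the paper's separate $b=0$ case, and your opening remark about using the same noise level does not match what you actually do, though nothing depends on it.
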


\begin{proof}
We have
\[
    s_\tau(y)-s_\tau(0)
    =
    -(\Sigma+\tau I)^{-1}y
\]
which follows immediately from
\[
    s_\tau(y)
    =
    -(\Sigma+\tau I)^{-1}(y-\mu).
\]

For the mean estimation part, define
\[
    b:=s_1(0)=(\Sigma+I)^{-1}\mu .
\]
Since $\Sigma\preceq I$, all eigenvalues of $\Sigma+I$ lie in $[1,2]$,
hence
\[
    \|\mu\|_2
    =
    \|(\Sigma+I)b\|_2
    \le 2\|b\|_2 .
\]

If $b=0$, then $\mu=0$, and we may take $\widehat\mu=0$.
Otherwise, for a target accuracy $\delta_\mu>0$, define
\[
    \tau_\mu:=\frac{2\|b\|_2}{\delta_\mu},
    \qquad
    \widehat\mu:=\tau_\mu s_{\tau_\mu}(0).
\]
Since
\[
    s_{\tau_\mu}(0)
    =
    (\Sigma+\tau_\mu I)^{-1}\mu,
\]
we have
\[
    \widehat\mu-\mu
    =
    -\Sigma(\Sigma+\tau_\mu I)^{-1}\mu .
\]
Therefore,
\[
    \|\widehat\mu-\mu\|_\Lambda^2
    =
    \sum_{i=1}^d
    \frac{\lambda_i}{(\lambda_i+\tau_\mu)^2}\mu_i^2
    \le
    \frac{1}{\tau_\mu^2}\|\mu\|_2^2 ,
\]
where $\lambda_i$ are the eigenvalues of $\Sigma$.
Hence
\[
    \|\widehat\mu-\mu\|_\Lambda
    \le
    \frac{\|\mu\|_2}{\tau_\mu}
    \le
    \delta_\mu .
\]
The construction uses only the two exact smoothed-score queries
$s_1(0)$ and $s_{\tau_\mu}(0)$.
\end{proof}

\subsection{Proof of Theorem~\ref{thm:exact-upper-short}}
\label{app:proof-exact-upper}

\begin{proof}
For a centered Gaussian target, the smoothed score is
\[
    s_\tau(y)=-(\Sigma+\tau I)^{-1}y.
\]
If \(\alpha=1/\tau\), then
\[
    \tau y+\tau^2s_\tau(y)
    =
    (\Lambda+\alpha I)^{-1}y.
\]
Indeed, on an eigenvector of \(\Lambda\) with eigenvalue \(\lambda\), the scalar
multiplier on the left is
\[
    \tau-\frac{\tau^2}{\lambda^{-1}+\tau}
    =
    \frac{1}{\lambda+\tau^{-1}}.
\]

Therefore, for each \(j\in\mathcal J\),
\[
    X_j
    =
    (\Lambda+\alpha_j I)^{-1}Z.
\]
Hence the algorithm outputs
\[
    Y
    =
    \sum_{j\in\mathcal J}
    c_j(\Lambda+\alpha_j I)^{-1}Z
    =
    r(\Lambda)Z.
\]
By Lemma~\ref{lem:sinc-quadrature} with $r:=r_{\eta,\kappa}$,
\[
    \sup_{x\in[1,\kappa]}
    \left|
        \sqrt{x}\,r(x)-1
    \right|
    \leq
    \eta.
\]

Let \(\lambda_1,\dots,\lambda_d\) be the eigenvalues of \(\Lambda\), and define
\[
    \delta_i
    =
    \sqrt{\lambda_i}\,r(\lambda_i)-1.
\]
Then \(|\delta_i|\leq\eta\). Since
\[
    Y\sim\mathcal N(0,r(\Lambda)^2)
    \qquad
    \text{and}
    \qquad
    \pi=\mathcal N(0,\Lambda^{-1}),
\]
the relative covariance eigenvalues are
\[
    \lambda_i r(\lambda_i)^2
    =
    (1+\delta_i)^2.
\]
Thus
\[
    \operatorname{KL}
    \left(
        \mathcal L(Y)
        \,\middle\|\,
        \pi
    \right)
    =
    \frac12
    \sum_{i=1}^{d}
    \left[
        (1+\delta_i)^2-1-2\log(1+\delta_i)
    \right].
\]
For \(|u|\leq1/2\),
\[
    (1+u)^2-1-2\log(1+u)
    \leq
    3u^2.
\]
Since the algorithm sets
\[
    \eta
    =
    \frac{\delta_{\mathrm{TV}}}{4\sqrt d},
\]
we have \(\eta\leq1/2\), and therefore
\[
    \operatorname{KL}
    \left(
        \mathcal L(Y)
        \,\middle\|\,
        \pi
    \right)
    \leq
    \frac32 d\eta^2.
\]
By Pinsker's inequality,
\[
\begin{aligned}
    d_{\mathrm{TV}}
    \left(
        \mathcal L(Y),
        \pi
    \right)
    &\leq
    \sqrt{
        \frac12
        \operatorname{KL}
        \left(
            \mathcal L(Y)
            \,\middle\|\,
            \pi
        \right)
    }                                      \\
    &\leq
    \frac{\sqrt3}{2}\sqrt d\,\eta
    \leq
    \sqrt d\,\eta
    \leq
    \delta_{\mathrm{TV}}.
\end{aligned}
\]

Finally, since \(h,M,N\) are defined in \eqref{e_hmn},
\[
    q=M+N+1
    =
    O\left(
        \left(\log\kappa+\log(1/\eta)\right)
        \log(1/\eta)
    \right).
\]
Substituting \(\eta=\delta_{\mathrm{TV}}/(4\sqrt d)\) gives
\[
    q
    =
    O\left(
        \left(
            \log\kappa+\log(\sqrt d/\delta_{\mathrm{TV}})
        \right)
        \log(\sqrt d/\delta_{\mathrm{TV}})
    \right).
\]
This proves the theorem.
\end{proof}

\subsection{Proof of Theorem~\ref{thm:exact-upper-independent}}
\label{app:proof-exact-upper-independent}
\begin{proof}
For a Gaussian target, the smoothed score satisfies
\[
    s_\tau(y)=-(\Sigma+\tau I)^{-1}y.
\]
Since $\tau_j=\alpha_j^{-1}$, the query output obeys
\[
    X_j
    =
    \tau_jZ_j+\tau_j^2s_{\tau_j}(Z_j)
    =
    (\Lambda+\alpha_jI)^{-1}Z_j.
\]
Let
\[
    R_j=(\Lambda+\alpha_jI)^{-1}.
\]
Then Algorithm~\ref{alg:exact-rational-sampler-independent} outputs
\[
    Y
    =
    \frac{1}{\sqrt{L_h}}
    \sum_{j\in\mathcal J} c_j R_j Z_j,
\]
where the random vectors $(Z_j)_{j\in\mathcal J}$ are independent standard Gaussians.

It follows that $Y$ is centered Gaussian with covariance
\[
    \cov(Y)
    =
    \frac{1}{L_h}
    \sum_{j\in\mathcal J}
    c_j^2(\Lambda+\alpha_jI)^{-2}.
\]
Since this covariance matrix is a function of $\Lambda$, it is diagonal in the same eigenbasis
as the target covariance $\Lambda^{-1}$. If $\lambda\in[1,\kappa]$ is an eigenvalue of
$\Lambda$, then the variance of $Y$ in that eigendirection is
\[
    \frac{1}{L_h}
    \sum_{j\in\mathcal J}
    \frac{c_j^2}{(\lambda+\alpha_j)^2}.
\]
The corresponding covariance ratio relative to the target variance $1/\lambda$ is therefore
\[
    T(\lambda)
    =
    \frac{\lambda}{L_h}
    \sum_{j\in\mathcal J}
    \frac{c_j^2}{(\lambda+\alpha_j)^2}.
\]
By the second estimate in Lemma~\ref{lem:sinc-quadrature},
\[
    |T(\lambda)-1|
    \leq
    \frac{C\eta}{L_h}.
\]
In Lemma~\ref{lem:sinc-quadrature} we may take $C=2$. Thus
\[
    |T(\lambda)-1|
    \leq
    \rho,
    \qquad
    \rho:=\frac{2\eta}{L_h}.
\]

We now verify that the explicit choice of $\eta$ implies the desired bound on $\rho$.
Recall that
\[
    L_h=\frac{2h}{\pi^2},
    \qquad
    h=\frac{\pi^2}{\log(C_0/\eta)},
\]
and hence
\[
    L_h=\frac{2}{\log(C_0/\eta)}.
\]
Therefore
\[
    \rho
    =
    \frac{2\eta}{L_h}
    =
    \eta\log(C_0/\eta).
\]
Let
\[
    r:=\frac{\delta_{\rm TV}}{\sqrt d},
    \qquad
    L:=\log(C_0/r)
    =
    \log(C_0\sqrt d/\delta_{\rm TV}).
\]
The algorithm chooses
\[
    \eta
    =
    \frac{r}{8L}
    =
    \frac{\delta_{\rm TV}}
    {8\sqrt d\,\log(C_0\sqrt d/\delta_{\rm TV})}.
\]
Hence
\[
    \log(C_0/\eta)
    =
    \log\left(\frac{8C_0L}{r}\right)
    =
    L+\log(8L).
\]
Consequently,
\[
\begin{aligned}
    \rho
    =
    \eta\log(C_0/\eta)
    &=
    \frac{r}{8L}
    \left(L+\log(8L)\right)        \\
    &=
    \frac r8
    \left(
        1+\frac{\log(8L)}{L}
    \right).
\end{aligned}
\]
Since $\delta_{\rm TV}\in(0,1)$ and $d\geq1$, we have $r\leq1$. Moreover,
\[
    L=\log(C_0/r)\geq \log C_0.
\]
With $C_0=12/(1-e^{-1})$, $\log C_0>2$, and therefore
\[
    1+\frac{\log(8L)}{L}\leq 4 .
\]
It follows that
\[
    \rho
    \leq
    \frac r2
    =
    \frac{\delta_{\rm TV}}{2\sqrt d}.
\]
In particular, since $\delta_{\rm TV}\in(0,1)$ and $d\geq1$,
\[
    \rho\leq \frac12.
\]

Let $\lambda_1,\ldots,\lambda_d$ denote the eigenvalues of $\Lambda$, and write
$T_i:=T(\lambda_i)$. The Gaussian KL formula gives
\[
    D\left(\mathcal L(Y)\,\middle\|\,\pi\right)
    =
    \frac12
    \sum_{i=1}^d
    \left(
        T_i-1-\log T_i
    \right).
\]
Since $|T_i-1|\leq \rho\leq 1/2$, the elementary inequality
\[
    u-\log(1+u)\leq u^2,
    \qquad |u|\leq 1/2,
\]
with $u=T_i-1$, yields
\[
    D\left(\mathcal L(Y)\,\middle\|\,\pi\right)
    \leq
    \frac12 d\rho^2.
\]
By Pinsker's inequality,
\[
    d_{\mathrm{TV}}\left(\mathcal L(Y),\pi\right)
    \leq
    \sqrt{\frac12
    D\left(\mathcal L(Y)\,\middle\|\,\pi\right)}
    \leq
    \frac{\sqrt d}{2}\rho
    \leq
    \frac{\delta_{\rm TV}}{4}
    \leq
    \delta_{\rm TV}.
\]

It remains to record the query complexity. The number of queries is
\[
    q=|\mathcal J|=M+N+1.
\]
With the sinc-grid parameters defined in Section~\ref{subsec:sinc-grid}, and with
\[
    \eta
    =
    \frac{\delta_{\rm TV}}
    {8\sqrt d\,\log(C_0\sqrt d/\delta_{\rm TV})},
\]
we have
\[
    \log(1/\eta)
    =
    O\left(\log(e\sqrt d/\delta_{\rm TV})\right).
\]
Indeed, the extra factor
$\log(C_0\sqrt d/\delta_{\rm TV})$ inside $1/\eta$ only contributes an additive
$\log\log(C_0\sqrt d/\delta_{\rm TV})$, which is absorbed by
$\log(e\sqrt d/\delta_{\rm TV})$. Therefore the same estimates as in
Theorem~\ref{thm:exact-upper-short} give
\[
    q
    =
    O\left(
        \left(
            \log\kappa+\log(e\sqrt d/\delta_{\rm TV})
        \right)
        \log(e\sqrt d/\delta_{\rm TV})
    \right),
\]
where $O(\cdot)$ hides universal constants. This completes the proof.
\end{proof}

\subsection{Proof of Theorem~\ref{thm:coordinate-quantized-upper}}
\label{app:proof-coordinatewise}

\begin{proof}
For a Gaussian target, the smoothed score is
\[
    s_\tau(y)=-(\Sigma+\tau I)^{-1}y.
\]
Since \(\tau_j=\alpha_j^{-1}\), we have
\[
    X_j
    =
    \tau_jZ+\tau_j^2s_{\tau_j}(Z)
    =
    (\Lambda+\alpha_jI)^{-1}Z.
\]
Writing \(R_j=(\Lambda+\alpha_jI)^{-1}\), the weighted contribution is
\[
    W_j=c_jX_j=c_jR_jZ.
\]
The exact unquantized output is
\[
    Y_0
    =
    \sum_{j\in\mathcal J}W_j
    =
    r(\Lambda)Z,
    \qquad
    r(x)=\sum_{j\in\mathcal J}\frac{c_j}{x+\alpha_j}.
\]
The actual quantized and dithered output is
\[
    Y
    =
    Y_0+E_{\mathrm{qnt}}+\sigma G,
    \qquad
    E_{\mathrm{qnt}}
    =
    \sum_{j\in\mathcal J}(\widehat W_j-W_j),
\]
where \(G\sim\mathcal N(0,I_d)\) is independent of all other randomness.

First compare with the ideal dithered output
\[
    Y_\sigma^\star
    =
    Y_0+\sigma G.
\]
Then, with $r:=r_{\eta,\kappa}$,
\[
    Y_\sigma^\star
    \sim
    \mathcal N(0,r(\Lambda)^2+\sigma^2I).
\]
Since both \(r(\Lambda)^2+\sigma^2I\) and \(\Lambda^{-1}\) are functions of
\(\Lambda\), they are diagonal in the same eigenbasis. If
\(\lambda\in[1,\kappa]\) is an eigenvalue of \(\Lambda\), then the target variance in
the corresponding eigendirection is \(1/\lambda\), while the variance of
\(Y_\sigma^\star\) is \(r(\lambda)^2+\sigma^2\). Hence the covariance ratio in this
direction is
\[
    T_\sigma(\lambda)
    =
    \lambda(r(\lambda)^2+\sigma^2)
    =
    \left(\sqrt{\lambda}\,r(\lambda)\right)^2+\sigma^2\lambda.
\]
By Lemma~\ref{lem:sinc-quadrature},
\[
    \left|\sqrt{\lambda}\,r(\lambda)-1\right|
    \leq
    \eta.
\]
Since \(\eta<1/2\),
\[
    \left|
        \left(\sqrt{\lambda}\,r(\lambda)\right)^2-1
    \right|
    \leq
    3\eta.
\]
Also \(\lambda\leq\kappa\), so
\[
    |T_\sigma(\lambda)-1|
    \leq
    3\eta+\kappa\sigma^2.
\]
Set
\[
    \rho:=3\eta+\kappa\sigma^2 .
\]
By the parameter choices in the algorithm,
\[
    \eta=\frac{\delta_{\rm TV}}{12\sqrt d},
    \qquad
    \sigma^2=\frac{\delta_{\rm TV}}{12\kappa\sqrt d},
\]
and therefore
\[
    \rho
    =
    \frac{\delta_{\rm TV}}{4\sqrt d}
    +
    \frac{\delta_{\rm TV}}{12\sqrt d}
    =
    \frac{\delta_{\rm TV}}{3\sqrt d}
    \leq
    \frac12.
\]

Let \(\lambda_1,\ldots,\lambda_d\) be the eigenvalues of \(\Lambda\), and write
\(T_i=T_\sigma(\lambda_i)\). Then \(|T_i-1|\leq\rho\) for all \(i\). The Gaussian KL
formula gives
\[
    D\left(\mathcal L(Y_\sigma^\star)\,\middle\|\,\pi\right)
    =
    \frac12
    \sum_{i=1}^d
    \left(
        T_i-1-\log T_i
    \right).
\]
Indeed, \(T_i\) is exactly the ratio between the variance of \(Y_\sigma^\star\) and
the target variance in the \(i\)-th eigendirection. Since \(T_i=1+u_i\) with
\(|u_i|\leq\rho\leq1/2\), the elementary inequality
\[
    u-\log(1+u)\leq u^2,
    \qquad |u|\leq \frac12,
\]
implies
\[
    D\left(\mathcal L(Y_\sigma^\star)\,\middle\|\,\pi\right)
    \leq
    \frac12\sum_{i=1}^d u_i^2
    \leq
    \frac12 d\rho^2 .
\]
By Pinsker's inequality,
\[
\begin{aligned}
    d_{\mathrm{TV}}\left(\mathcal L(Y_\sigma^\star),\pi\right)
    &\leq
    \sqrt{
        \frac12
        D\left(\mathcal L(Y_\sigma^\star)\,\middle\|\,\pi\right)
    }                                      \\
    &\leq
    \frac{\sqrt d}{2}\rho                  \\
    &=
    \frac{\sqrt d}{2}
    \left(3\eta+\kappa\sigma^2\right)
    =
    \frac{\delta_{\rm TV}}{6}.
\end{aligned}
\]

It remains to control the quantization error. For \(t=e^{jh}\), we have
\(\alpha_j=t^2\) and \(c_j=(2h/\pi)t\). Hence
\[
    \|c_jR_j\|_{\mathrm{op}}
    =
    \sup_{\lambda\in[1,\kappa]}
    \frac{c_j}{\lambda+\alpha_j}
    \leq
    \frac{2h}{\pi}\frac{t}{1+t^2}
    \leq
    \frac{h}{\pi}.
\]
Thus every coordinate of every \(W_j=c_jR_jZ\) is a centered Gaussian with variance at
most \((h/\pi)^2\). Hence, for each \(j\in\mathcal J\) and each coordinate
\(\ell\in[d]\),
\[
    \mathbb P\left(
        |(W_j)_\ell|>R_{\mathrm{clip}}
    \right)
    \leq
    2\exp\left(
        -\frac{\pi^2R_{\mathrm{clip}}^2}{2h^2}
    \right).
\]
By a union bound over all \(q=|\mathcal J|\) values of \(j\) and all \(d\)
coordinates,
\[
    \mathbb P\left(
        \max_{j\in\mathcal J}
        \max_{\ell\in[d]}
        |(W_j)_\ell|>R_{\mathrm{clip}}
    \right)
    \leq
    2dq
    \exp\left(
        -\frac{\pi^2R_{\mathrm{clip}}^2}{2h^2}
    \right).
\]
The algorithm sets
\[
    p_{\rm clip}
    =
    \frac{\delta_{\rm TV}}{3},
    \qquad
    R_{\mathrm{clip}}
    =
    \frac{h}{\pi}
    \sqrt{
        2\log\left(\frac{2dq}{p_{\rm clip}}\right)
    }
    =
    \frac{h}{\pi}
    \sqrt{
        2\log\left(\frac{6dq}{\delta_{\rm TV}}\right)
    }.
\]
Therefore
\[
    \frac{\pi^2R_{\mathrm{clip}}^2}{2h^2}
    =
    \log\left(\frac{2dq}{p_{\rm clip}}\right),
\]
and hence
\[
\begin{aligned}
    2dq
    \exp\left(
        -\frac{\pi^2R_{\mathrm{clip}}^2}{2h^2}
    \right)
    &=
    2dq
    \exp\left(
        -\log\left(\frac{2dq}{p_{\rm clip}}\right)
    \right) \\
    &=
    p_{\rm clip}
    =
    \frac{\delta_{\rm TV}}{3}.
\end{aligned}
\]
Thus the clipping event fails with probability at most \(p_{\rm clip}\).

On the no-clipping event, each coordinate quantization error is at most
\[
    \frac{\Delta_B}{2}
    =
    \frac{R_{\mathrm{clip}}}{2^B-1}.
\]
Therefore
\[
    \|E_{\mathrm{qnt}}\|_2
    \leq
    \sum_{j\in\mathcal J}
    \|\widehat W_j-W_j\|_2
    \leq
    \frac{q\sqrt d\,R_{\mathrm{clip}}}{2^B-1}.
\]
Condition on \(Z\). Then \(Y\) and \(Y_\sigma^\star\) differ only by the deterministic
shift \(E_{\mathrm{qnt}}\), and both have common Gaussian dither covariance
\(\sigma^2I\). For equal-covariance Gaussians,
\[
    d_{\mathrm{TV}}
    \left(
        \mathcal N(m+e,\sigma^2I),
        \mathcal N(m,\sigma^2I)
    \right)
    \leq
    \frac{\|e\|_2}{2\sigma}.
\]
By convexity of total variation under mixtures, splitting according to the clipping
event gives
\[
    d_{\mathrm{TV}}
    \left(
        \mathcal L(Y),
        \mathcal L(Y_\sigma^\star)
    \right)
    \leq
    p_{\rm clip}
    +
    \frac{q\sqrt d\,R_{\mathrm{clip}}}{2\sigma(2^B-1)}.
\]
By the choice of \(B\),
\[
    2^B-1
    \geq
    \frac{q\sqrt d\,R_{\mathrm{clip}}}{\sigma\delta_{\rm TV}},
\]
so
\[
    \frac{q\sqrt d\,R_{\mathrm{clip}}}{2\sigma(2^B-1)}
    \leq
    \frac{\delta_{\rm TV}}{2}.
\]
Since \(p_{\rm clip}=\delta_{\rm TV}/3\), we obtain
\[
    d_{\mathrm{TV}}
    \left(
        \mathcal L(Y),
        \mathcal L(Y_\sigma^\star)
    \right)
    \leq
    \frac{\delta_{\rm TV}}{3}
    +
    \frac{\delta_{\rm TV}}{2}.
\]
Combining this with
\[
    d_{\mathrm{TV}}\left(\mathcal L(Y_\sigma^\star),\pi\right)
    \leq
    \frac{\delta_{\rm TV}}{6}
\]
gives
\[
    d_{\mathrm{TV}}\left(\mathcal L(Y),\pi\right)
    \leq
    \delta_{\rm TV}.
\]

Finally, since \(\eta=\delta_{\rm TV}/(12\sqrt d)\) and \(h,M,N\) are defined by
\eqref{e_hmn},
\[
    q=M+N+1
    =
    O\left(
        \left(
            \log(e\kappa)+\log(e\sqrt d/\delta_{\rm TV})
        \right)
        \log(e\sqrt d/\delta_{\rm TV})
    \right).
\]
The choice of \(B\) gives
\[
\begin{aligned}
    B
    &=
    O\left(
        1+\log_2\left(
            \frac{q\sqrt d\,R_{\mathrm{clip}}}{\sigma\delta_{\rm TV}}
        \right)
    \right) \\
    &=
    O\left(
        \log(e\kappa)+\log(ed/\delta_{\rm TV})
    \right),
\end{aligned}
\]
using
\[
    R_{\mathrm{clip}}
    =
    \frac{h}{\pi}
    \sqrt{2\log(6dq/\delta_{\rm TV})},
    \qquad
    \sigma
    =
    \sqrt{
        \frac{\delta_{\rm TV}}{12\kappa\sqrt d}
    }.
\]
Since each query communicates \(dB\) bits, \(Q=dBq\), and the stated communication
bound follows.
\end{proof}

\section{Proof of the channel-synthesis converse}
\label{app:proofs-channel}
\subsection{Proof of Theorem~\ref{thm:channel-synthesis-converse}}
\label{app:proof-channel-synthesis}
\begin{proof}
Fix a code for $P_{Y|X}$ with $L=2^{k'}$ messages and error probability $\epsilon(k')$. Thus there are an encoder $x(m)$ and a decoder $\varphi(y)$ such that, for $M\sim\operatorname{Unif}([L])$ and $Y\sim P_{Y|X=x(M)}$,
$$
    \mathbb P(\varphi(Y)\neq M)
    \leq
    \epsilon(k').
$$

Now replace the true channel output $Y$ by the simulated output $\widehat Y$. For each message $m$, the channel input is $x(m)$. The simulation guarantee holds for every input $x$, and therefore
$$
    d_{\mathrm{TV}}
    \left(
        P_{\widehat Y|M=m},
        P_{Y|M=m}
    \right)
    \leq
    \delta
    \qquad
    \text{for every }m\in[L].
$$
The probability above is over the common randomness $Z$. Hence the same decoder satisfies
$$
    \mathbb P(\varphi(\widehat Y)\neq M)
    \leq
    \epsilon(k')+\delta .
$$
Equivalently, $\widetilde M=\varphi(\widehat Y)$ estimates $M$ with success probability at least
$$
    \mathbb P(\widetilde M=M)
    \geq
    1-\epsilon(k')-\delta .
$$

After replacing $Y$ by $\widehat Y$, the true channel output is no longer available. The simulated output is a function of the transcript and common randomness, $\widehat Y=\widehat Y(T,Z)$, so $\widetilde M=\varphi(\widehat Y)$ is also a function of $(T,Z)$. Thus there is an estimator $\psi$ such that $\widetilde M=\psi(T,Z)$.

We claim that every estimator $\psi(T,Z)$ satisfies
$$
    \mathbb P(\psi(T,Z)=M)
    \leq
    \frac{2^Q}{L}
    =
    2^{Q-k'}.
$$
To see this, condition on a realization \(Z=z\). Since the common randomness \(Z\) is independent of the message \(M\), the conditional distribution of \(M\) given \(Z=z\) is still uniform on \([L]\). The transcript always takes values in a fixed alphabet \(\mathcal{T}\) with \(|\mathcal{T}|\leq 2^Q\). For this fixed \(z\), we have
$$
\begin{aligned}
    \mathbb P(\psi(T,z)=M\mid Z=z)
    &=
    \frac1L
    \sum_{m=1}^{L}
    \mathbb P(\psi(T,z)=m\mid M=m,Z=z) \\
    &=
    \frac1L
    \sum_{m=1}^{L}
    \sum_{\substack{t\in\mathcal{T}:\\ \psi(t,z)=m}}
    \mathbb P(T=t\mid M=m,Z=z) \\
    &=
    \frac1L
    \sum_{t\in\mathcal{T}}
    \mathbb P(T=t\mid M=\psi(t,z),Z=z) \\
    &\leq
    \frac{|\mathcal{T}|}{L}
    \leq
    \frac{2^Q}{L}.
\end{aligned}
$$
Averaging over \(Z\) proves the claim. The key points are that \(M\) and \(Z\) are independent, so conditioning on \(Z=z\) does not change the uniform prior on \(M\), and that the transcript alphabet \(\mathcal{T}\) has size at most \(2^Q\). Any adaptivity in the protocol only changes the conditional law of \(T\) given \((M,Z=z)\); the counting bound uses only the alphabet-size constraint.
Combining the lower and upper bounds on the success probability gives
$$
    1-\epsilon(k')-\delta
    \leq
    2^{Q-k'}.
$$
Taking logarithms yields
$$
    Q
    \geq
    k'
    +
    \log_2(1-\epsilon(k')-\delta)
    =
    k'
    -
    \log_2
    \frac{1}{1-\delta-\epsilon(k')}.
$$
Maximizing over $k'$ proves the theorem.
\end{proof}

\section{Proofs for the capacity calculation}
\label{app:proofs-capacity}
\subsection{A concentration inequality}
\label{app:proof-spherical-tube}

\begin{lemma}[Tube bound around a subspace]
\label{lem:spherical-tube-subspace}
Let \(V\subset\mathbb R^d\) be a fixed \(r\)-dimensional subspace, and let
\(v\sim\operatorname{Unif}(\mathbb S^{d-1})\).  Write \(m=d-r\).  For \(0<\theta<1\),
$$
    \mathbb P\left(
        \operatorname{dist}(v,V)\leq \theta
    \right)
    \leq
    \left(
        C\sqrt{\frac d m}\,\theta
    \right)^m,
$$
where \(C>0\) is universal and \(\operatorname{dist}(v,V)=\|P_{V^\perp}v\|\).  In
particular, if \(r\leq d/2\), this probability is at most \((C\theta)^{d-r}\).  By
rotational invariance, the same bound holds for a fixed unit vector \(v\) and a
uniformly random rank-\(r\) subspace \(V\).
\end{lemma}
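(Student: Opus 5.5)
By rotational invariance we may assume \(V=\operatorname{span}(e_1,\dots,e_r)\). Then \(\operatorname{dist}(v,V)^2=\sum_{i>r}v_i^2\). We represent the uniform vector as \(v=g/\|g\|\) with \(g\sim\mathcal N(0,I_d)\) and split \(g=(g_V,g_\perp)\), where \(g_\perp\in\mathbb R^m\). The quantity \(\|P_{V^\perp}v\|^2=\|g_\perp\|^2/(\|g_V\|^2+\|g_\perp\|^2)\) is then \(\mathrm{Beta}(m/2,r/2)\)-distributed. One route is to compute this beta tail directly. A cleaner route that yields the stated form goes through Gaussians with a simple union of two events.

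\textbf{Step 1 (splitting).} For a parameter \(t>0\) to be chosen, observe that
\[
\{\operatorname{dist}(v,V)\le\theta\}\subseteq\{\|g\|^2\ge t\,d\}\cup\{\|g_\perp\|^2\le \theta^2 t\,d\}.
\]
Indeed, if \(\|g\|^2<td\) and \(\|g_\perp\|\le\theta\|g\|\), then \(\|g_\perp\|^2<\theta^2td\).

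\textbf{Step 2 (small-ball bound for \(g_\perp\)).} The density of \(g_\perp\) is at most \((2\pi)^{-m/2}\), and the ball of radius \(s\) in \(\mathbb R^m\) has volume \(\pi^{m/2}s^m/\Gamma(m/2+1)\le (C s/\sqrt m)^m\). Hence \(\mathbb P(\|g_\perp\|\le s)\le (C's/\sqrt m)^m\). Taking \(s=\theta\sqrt{td}\) gives \((C'\sqrt{t}\sqrt{d/m}\,\theta)^m\).

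\textbf{Step 3 (large deviation of \(\|g\|^2\)).} By the Chernoff bound for \(\chi^2_d\), \(\mathbb P(\|g\|^2\ge td)\le (te^{1-t})^{d/2}\le e^{-td/4}\) for \(t\ge t_0\). We need this to be at most \((C\sqrt{d/m}\,\theta)^m\). When \(\theta\sqrt{d/m}\) is not tiny the claimed bound is trivial (it exceeds 1 once \(C\) is large), so assume it is small.

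\textbf{Main obstacle.} The difficulty is that \(\theta\) can be arbitrarily small, whereas a fixed \(t\) gives only \(e^{-cd}\), which cannot beat \((\theta\sqrt{d/m})^m\) as \(\theta\to0\). Two fixes are available. The first is to let \(t\) depend on \(\theta\), taking \(t\asymp (m/d)\log(1/(\theta^2 d/m))+\) const. This costs a \(\sqrt{\log}\) factor in Step 2, which must then be absorbed by adjusting \(C\) and using \(x\sqrt{\log(1/x)}\le C x^{1-\epsilon}\). That is messy, so I prefer the second fix.

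The second fix avoids the union bound by conditioning on \(g_V\) instead. Since \(\|g\|\ge\|g_V\|\), the event \(\|g_\perp\|\le\theta\|g\|\) implies \(\|g_\perp\|^2(1-\theta^2)\le\theta^2\|g_V\|^2\). By independence and Step 2,
\[
\mathbb P\le \mathbb E\bigl[(C'\theta\|g_V\|/\sqrt{m(1-\theta^2)})^m\bigr].
\]
It then remains to show \(\mathbb E\|g_V\|^m\le (C''\sqrt{d})^m\), which holds since \(\mathbb E\|g_V\|^m\le(\mathbb E\|g_V\|^{2m})^{1/2}\) and \(\mathbb E\chi^2_r{}^{\,m}=\prod_{k<m}(r+2k)\le (r+2m)^m\le(2d)^m\). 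The factor \((1-\theta^2)^{-m/2}\) is an issue only for \(\theta\) near 1, where the bound is trivial anyway after enlarging \(C\) (because \(C\theta\sqrt{d/m}\ge1\)). This yields \((C\sqrt{d/m}\,\theta)^m\).

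\textbf{Consequences.} If \(r\le d/2\), then \(m\ge d/2\) and \(\sqrt{d/m}\le\sqrt2\), which gives \((C\theta)^{d-r}\). The version with a fixed vector and a random subspace follows because \(\|P_{V^\perp}v\|\) has the same law under \((Uv,V)\) and \((v,U^\top V)\) for a Haar-distributed rotation \(U\).
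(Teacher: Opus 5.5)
Your final argument (the ``second fix'') is correct, but it follows a different route from the paper.

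\textbf{The paper's route.} The paper also reduces to $U=\operatorname{dist}(v,V)^2\sim\mathrm{Beta}(m/2,r/2)$, and then bounds this beta CDF directly. For $u\le 1/4$ it uses $\int_0^u t^{a-1}(1-t)^{b-1}\,dt\le Cu^a/a$, together with the Stirling-type estimate $\Gamma(a+b)/(\Gamma(a+1)\Gamma(b))\le (C(a+b)/a)^a$. It handles $\theta^2>1/4$ by noting the bound is trivial there.

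\textbf{Your route.} You never use the beta law. Instead you:
\begin{itemize}
\item condition on $g_V$;
\item apply the Gaussian small-ball bound $\mathbb P(\|g_\perp\|\le s)\le(\sqrt e\,s/\sqrt m)^m$ (sup-density times ball volume, with $\Gamma(m/2+1)\ge(m/2e)^{m/2}$);
\item average using $\mathbb E\|g_V\|^m\le(\mathbb E\|g_V\|^{2m})^{1/2}\le(2d)^{m/2}$ from the $\chi^2_r$ moments.
\end{itemize}
This gives $\bigl(\sqrt{2e}\,\theta\sqrt{d/m}/\sqrt{1-\theta^2}\bigr)^m$. The factor $(1-\theta^2)^{-m/2}$ is harmless, exactly as you say: for $\theta\le 1/\sqrt2$ it costs a factor $\sqrt2$ inside the power, and for $\theta>1/\sqrt2$ the claimed bound exceeds $1$ once $C\ge\sqrt2$.

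\textbf{What each buys.} The paper's computation is shorter and essentially sharp in its constant, because it works with the exact distribution of the ratio. Yours is more elementary, since it needs no beta-function asymptotics. It is also more robust: it only uses independence of the two components, a sup-density bound for the $V^\perp$ part, and an $m$-th moment bound for the $V$ part. So it would adapt to settings, such as block-structured non-isotropic Gaussians, where no exact beta law is available.

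\textbf{Two small remarks.} First, the inequality $\|g_\perp\|^2(1-\theta^2)\le\theta^2\|g_V\|^2$ follows by rearranging $\|g_\perp\|^2\le\theta^2(\|g_V\|^2+\|g_\perp\|^2)$, i.e., from the Pythagorean identity. It does not follow from $\|g\|\ge\|g_V\|$, which points the wrong way for this purpose. Second, your Steps 1 and 3 and the first fix are not used in the final proof, so they can be dropped.
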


\begin{proof}
By rotational invariance, take $V=\operatorname{span}(e_1,\dots,e_r)$. Let $g\sim\mathcal N(0,I_d)$, so $v=g/\|g\|$ is uniform on $\mathbb S^{d-1}$. Then
$$
    \operatorname{dist}(v,V)^2
    =
    \frac{\sum_{i=r+1}^d g_i^2}{\sum_{i=1}^d g_i^2}.
$$
If $A=\sum_{i=r+1}^d g_i^2$ and $B=\sum_{i=1}^r g_i^2$, then $A\sim\chi_m^2$, $B\sim\chi_r^2$, and $A,B$ are independent. Hence
$$
    U:=\operatorname{dist}(v,V)^2=\frac{A}{A+B}
    \sim
    \operatorname{Beta}\left(\frac m2,\frac r2\right).
$$
This beta-distribution identity is the standard projection law for a uniform point on the sphere; see, for example, \cite[Chapter 3]{vershynin2018high}.

Let $a=m/2$, $b=r/2$, and $u=\theta^2$. For $u\leq1/4$,
$$
\begin{aligned}
    \mathbb P(U\leq u)
    &=
    \frac{1}{\mathrm B(a,b)}
    \int_0^u t^{a-1}(1-t)^{b-1}\,dt  \\
    &\leq
    \frac{C}{a\,\mathrm B(a,b)}u^a .
\end{aligned}
$$
By Stirling's formula,
$$
    \frac{1}{a\,\mathrm B(a,b)}
    =
    \frac{\Gamma(a+b)}
         {\Gamma(a+1)\Gamma(b)}
    \leq
    \left(
        C\frac{a+b}{a}
    \right)^a
    =
    \left(
        C\frac d m
    \right)^{m/2}.
$$
Therefore
$$
    \mathbb P(U\leq \theta^2)
    \leq
    \left(
        C\sqrt{\frac d m}\,\theta
    \right)^m.
$$
If $\theta^2>1/4$, the same bound holds after increasing $C$, since the right-hand side is then at least one. This proves the claim.
\end{proof}

\subsection{Proof of Theorem~\ref{thm:capacity-random-low-rank}}
\label{app:proof-random-low-rank}

\begin{proof}
We construct an explicit one-shot code. Fix a rank \(r\) to be chosen later. For a
rank-\(r\) subspace \(U\subset\mathbb R^d\), let \(P_U\) be the orthogonal projection
onto \(U\), and define
\begin{equation}
    \Sigma_U
    =
    P_U+\frac1\kappa(I-P_U).
    \label{eq:cap-Sigma-U}
\end{equation}
Then \(\Sigma_U\) has eigenvalue \(1\) on \(U\) and eigenvalue \(1/\kappa\) on
\(U^\perp\), so \(\mathcal N(0,\Sigma_U)\in\mathcal C_d(\kappa)\).

Given input subspace \(U\), the channel output can be written as
\begin{equation}
    Y
    =
    QG+\frac1{\sqrt\kappa}Q_\perp H,
    \label{eq:cap-Y-representation}
\end{equation}
where \(Q\) is an orthonormal basis for \(U\), \(Q_\perp\) is an orthonormal basis for
\(U^\perp\), and \(G\sim\mathcal N(0,I_r)\), \(H\sim\mathcal N(0,I_{d-r})\) are
independent.

For a nonzero vector \(y\), define
\begin{equation}
    \operatorname{dist}(y,U)
    =
    \frac{\|P_{U^\perp}y\|}{\|y\|}.
    \label{eq:cap-dist-def}
\end{equation}
Choose constants \(a_\rho>0\) and \(b_\rho>0\) so that, uniformly over
\(1\leq r\leq d/2\),
\begin{equation}
    \mathbb P(\|G\|<a_\rho\sqrt r)
    +
    \mathbb P(\|H\|>b_\rho\sqrt d)
    \leq
    \frac{\rho}{4}.
    \label{eq:cap-chi-tail}
\end{equation}
Such constants exist by standard chi-square tail bounds. Define the good event
\begin{equation}
    \mathcal E_{\mathrm{good}}
    =
    \{\|G\|\geq a_\rho\sqrt r\}
    \cap
    \{\|H\|\leq b_\rho\sqrt d\}.
    \label{eq:cap-good-event}
\end{equation}
By \eqref{eq:cap-chi-tail},
\[
    \mathbb P(\mathcal E_{\mathrm{good}}^c)
    \leq
    \frac{\rho}{4}.
\]
On \(\mathcal E_{\mathrm{good}}\), \eqref{eq:cap-Y-representation} gives
\[
    \|P_{U^\perp}Y\|
    =
    \kappa^{-1/2}\|H\|,
    \qquad
    \|Y\|
    \geq
    \|P_UY\|
    =
    \|G\|.
\]
Combining this with \eqref{eq:cap-dist-def} and \eqref{eq:cap-good-event}, we get
\[
    \operatorname{dist}(Y,U)
    \leq
    \frac{\kappa^{-1/2}\|H\|}{\|G\|}
    \leq
    \frac{b_\rho}{a_\rho}\sqrt{\frac d{\kappa r}}.
\]
Set
\begin{equation}
    A_\rho=\frac{b_\rho}{a_\rho},
    \qquad
    \theta
    =
    A_\rho\sqrt{\frac d{\kappa r}}.
    \label{eq:cap-theta}
\end{equation}
Thus, on the good event, \(\operatorname{dist}(Y,U)\leq\theta\).

Now draw \(M_{\mathrm{code}}\) independent rank-\(r\) subspaces
\(U_1,\dots,U_{M_{\mathrm{code}}}\) uniformly from the Grassmannian. Message \(m\) is
encoded by the covariance \(\Sigma_{U_m}\) from \eqref{eq:cap-Sigma-U}. The decoder
receives \(Y\) and outputs the index of the nearest codeword subspace:
\begin{equation}
    \widehat m
    \in
    \operatorname*{argmin}_{1\leq \ell\leq M_{\mathrm{code}}}
    \operatorname{dist}(Y,U_\ell).
    \label{eq:cap-decoder}
\end{equation}

Condition on message \(m\) and on \(\mathcal E_{\mathrm{good}}\). By
\eqref{eq:cap-theta}, the true codeword satisfies \(\operatorname{dist}(Y,U_m)\leq
\theta\). Hence an error can occur only if some incorrect \(U_\ell\) satisfies
\(\operatorname{dist}(Y,U_\ell)\leq\theta\). Conditional on \(Y\), each incorrect
\(U_\ell\) is an independent uniform rank-\(r\) subspace. For \(\kappa\) sufficiently
large, the rank choice below ensures \(r\leq d/2\), so
Lemma~\ref{lem:spherical-tube-subspace} gives
\[
    \mathbb P\left(
        \operatorname{dist}(Y,U_\ell)\leq \theta
        \,\middle|\, Y
    \right)
    \leq
    (C\theta)^{d-r}.
\]
Therefore the average error probability of the random code is at most
\[
    \frac{\rho}{4}
    +
    M_{\mathrm{code}}(C\theta)^{d-r}.
\]
Choose
\begin{equation}
    M_{\mathrm{code}}
    =
    \left\lfloor
        \frac{\rho}{4}(C\theta)^{-(d-r)}
    \right\rfloor .
    \label{eq:cap-M-code}
\end{equation}
Then the expected average error is at most \(\rho/2\), and hence there exists a
deterministic code with average error at most \(\rho\).

From \eqref{eq:cap-M-code} and \eqref{eq:cap-theta}, and using
\(\theta=A_\rho\sqrt{d/(\kappa r)}\), we obtain
\begin{align}
    \log M_{\mathrm{code}}
    &\geq
    (d-r)\log\frac1{C\theta}
    +\log\frac{\rho}{8}                                           \notag \\
    &=
    \frac{d-r}{2}
    \left(
        \log\kappa+\log\frac r d
    \right)
    -
    (d-r)\log(CA_\rho)
    +
    \log\frac{\rho}{8}. 
    \label{eq:cap-message-bits-expanded}
\end{align}
Here \(C>0\) is the universal constant from Lemma~\ref{lem:spherical-tube-subspace},
whereas \(A_\rho=b_\rho/a_\rho\) depends only on \(\rho\).  Since \(d\geq2\) and
\(d-r\leq d\), the last two terms in \eqref{eq:cap-message-bits-expanded} can be
absorbed into \(-C_\rho d\), where \(C_\rho>0\) depends only on \(\rho\).  Hence
\begin{equation}
    \log M_{\mathrm{code}}
    \geq
    \frac{d-r}{2}
    \left(
        \log\kappa+\log\frac r d
    \right)
    -
    C_\rho d 
    \label{eq:cap-message-bits-lb}
\end{equation}
for some $C_\rho>0$ that may depend on $\rho$.
Now choose
\begin{equation}
    r
    =
    \max\left\{
        1,
        \left\lfloor
            \frac d{\log\kappa}
        \right\rfloor
    \right\}.
    \label{eq:cap-r-choice}
\end{equation}
The factor $\frac1{\log\kappa}$ in \eqref{eq:cap-r-choice} arises from optimizing $r$ in \eqref{eq:cap-message-bits-lb},
but is not critical for our final conclusion.
For \(\kappa\) sufficiently large, this choice satisfies \(r\leq d/2\). If
\(\log\kappa\leq d/2\), then \(r\asymp d/\log\kappa\), and
\eqref{eq:cap-message-bits-lb} implies
\[
    \log M_{\mathrm{code}}
    \geq
    \frac d2\log\kappa
    -
    \frac d2\log\log\kappa
    -
    C_\rho d
    \geq
    c_\rho d\log\kappa
\]
for sufficiently large \(\kappa\). If \(\log\kappa>d/2\), then \(r=1\), and
\eqref{eq:cap-message-bits-lb} gives
\[
    \log M_{\mathrm{code}}
    \geq
    \frac{d-1}{2}
    \left(
        \log\kappa-\log d
    \right)
    -
    C_\rho d
    \geq
    c_\rho d\log\kappa
\]
again for sufficiently large \(\kappa\). Hence
\[
    \epsilon_{\mathrm G}(k')\leq \rho
    \qquad
    \text{for every }
    k'\leq c_\rho d\log\kappa .
\]

Finally, the capacity lower bound follows from the same code. Let \(M\) be uniform over
the codebook messages. Since the average decoding error is at most \(\rho\), Fano's
inequality gives
\[
    I(M;Y)
    \geq
    (1-\rho)\log M_{\mathrm{code}}-1.
\]
The induced prior over covariance matrices is admissible, so
\[
    C_{2,d}(\kappa)
    \geq
    I(M;Y)
    \geq
    c\,d\log\kappa
\]
after adjusting constants. This proves the theorem.
\end{proof}

\subsection{Proof of Theorem~\ref{thm:fixed-kappa-large-d}}
\label{app:proof-fixed-kappa-large-d}

We prove the claim by restricting the Gaussian covariance class to a diagonal binary subchannel. Fix \(\kappa>1\). Consider the scalar binary-input channel \(W_\kappa\) with input \(B\in\{0,1\}\) and output \(Y\in\mathbb R\) given by
\[
    B=0:\quad Y\sim \mathcal N(0,1/\kappa),
    \qquad
    B=1:\quad Y\sim \mathcal N(0,1).
\]
This channel has strictly positive capacity because the two output distributions are distinct. Let
\[
    C_\kappa
    =
    \max_{P_B} I(B;Y)
    >
    0.
\]
Choose any rate \(R_\kappa\in(0,C_\kappa)\). By the standard random-coding theorem, equivalently the positivity of the random-coding error exponent below capacity, there exists \(E_\kappa>0\) such that, for all sufficiently large \(d\), the \(d\)-fold product channel \(W_\kappa^{\otimes d}\) admits a code with
\[
    M_d
    =
    \left\lfloor 2^{R_\kappa d}\right\rfloor
\]
messages and average decoding error at most
\[
    2^{-E_\kappa d}.
\]
See, for example, the classical channel-coding error exponent bound of Gallager
\citep{gallager1968information}.

We now embed this product channel into the Gaussian covariance class
\[
    \mathcal C_d(\kappa)
    =
    \left\{
        \mathcal N(0,\Sigma):
        \frac1\kappa I\preceq \Sigma\preceq I
    \right\}.
\]
For each binary vector \(b=(b_1,\dots,b_d)\in\{0,1\}^d\), define the diagonal covariance
\[
    \Sigma_b
    =
    \operatorname{diag}(\sigma_1^2,\dots,\sigma_d^2),
    \qquad
    \sigma_i^2
    =
    \begin{cases}
        1/\kappa, & b_i=0,\\
        1, & b_i=1.
    \end{cases}
\]
Then \(\mathcal N(0,\Sigma_b)\in\mathcal C_d(\kappa)\), and drawing
\(Y\sim\mathcal N(0,\Sigma_b)\) is exactly one use of the product channel
\(W_\kappa^{\otimes d}\) with input \(b\).

Let the allowed sampling error satisfy
\[
    1-\delta_{\mathrm{TV}}=2^{-ad},
\]
where \(a>0\) will be chosen small enough. Set
\[
    \alpha_d
    =
    \frac{1-\delta_{\mathrm{TV}}}{2}
    =
    2^{-ad-1}.
\]
If \(a\leq E_\kappa/2\), then for all sufficiently large \(d\),
\[
    2^{-E_\kappa d}
    \leq
    2^{-ad-1}
    =
    \alpha_d.
\]
Thus the embedded Gaussian covariance channel admits a one-shot code with
\[
    k'_d
    =
    \left\lfloor R_\kappa d\right\rfloor
\]
message bits and decoding error at most \(\alpha_d\).

Now apply Corollary~\ref{cor:channel-synthesis-fixed-error}, or equivalently the inverse form of the channel-synthesis converse, with
\[
    \rho=\alpha_d=\frac{1-\delta_{\mathrm{TV}}}{2}.
\]
Any \(\delta_{\mathrm{TV}}\)-accurate finite-bit sampler for
\(\mathcal C_d(\kappa)\) must satisfy
\[
    Q
    \geq
    k'_d
    -
    \log_2\frac{2}{1-\delta_{\mathrm{TV}}}.
\]
Since \(1-\delta_{\mathrm{TV}}=2^{-ad}\), the logarithmic penalty is
\[
    \log_2\frac{2}{1-\delta_{\mathrm{TV}}}
    =
    ad+1.
\]
Therefore
\[
    Q
    \geq
    \left\lfloor R_\kappa d\right\rfloor
    -
    ad
    -
    1.
\]
Choose
\[
    a_\kappa
    =
    \frac12\min\{R_\kappa,E_\kappa\}.
\]
Then for every \(0<a\leq a_\kappa\),
\[
    Q
    \geq
    \frac{R_\kappa}{2}d-2
\]
for all sufficiently large \(d\). Hence, after increasing the threshold on \(d\) if necessary,
\[
    Q
    \geq
    c_\kappa d
\]
with, for instance, \(c_\kappa=R_\kappa/4>0\). This proves the theorem.

\bibliographystyle{apalike}
\bibliography{ref}
\end{document}